\def\figuresize{3.0in}
\def\complexNumbers{\mathbb{C}}
\def\integersPositive{\mathbb{Z}^{+}}
\def\functionSpace[#1]{\mathcal{F}(#1)}
\def\realNumbers{\mathbb{R}}
\def\integers{\mathbb{Z}}
\def\constante{{\rm e}}
\def\constantj{{\rm j}}
\def\expectationOperator[#1][#2]{\mathbb{E}\left[#1\right]}
\def\uniformDistribution[#1][#2]{{\mathcal{U}_{[#1,#2]}}}
\def\traceOperator[#1]{{\mathrm{tr}}\{#1\}}
\def\identityMatrix[#1]{\textbf{\mathrm{I}}_{#1}}
\def\zeroVector[#1]{\textbf{\mathrm{0}}_{#1}}
\def\exponentialIntegral[#1]{\mathrm{Ei}(#1)}
\def\functionArbitrary[#1]{f_{#1}}
\def\functionArbitraryEstimate[#1]{\hat{f}_{#1}}
\def\clamp[#1][#2]{\text{clamp}_{#2}\left(#1\right)}
\def\signNormal[#1]{\text{sign}\left(#1\right)}
\def\diagOperation[#1]{\text{diag}\left\{#1\right\}}
\def\probability[#1]{\mathrm{Pr}\left({#1}\right)}
\def\complexGaussian[#1][#2]{\mathcal{CN}({#1,#2})}
\def\gaussian[#1][#2]{\mathcal{N}({#1,#2})}
\def\normalPDF[#1]{\phi\left(#1\right)}
\def\normalCDF[#1]{\Phi\left(#1\right)}
\def\CDF[#1][#2][#3]{F_{#1}^{#2}\left({#3}\right)}
\def\PDF[#1][#2][#3]{f_{#1}^{#2}\left({#3}\right)}
\def\numberOfEdgeDevices{K}
\def\indexED{k}
\def\voteVectorEDEle[#1][#2]{v_{#1,#2}}
\def\voteVectorED[#1]{{\textit{\textbf{v}}}_{#1}}
\def\majorityVoteEle[#1]{\omega_{#1}}
\def\majorityVoteDetectedEle[#1]{\hat{\omega}_{#1}}
\def\varMonomial{{i_{\rm d}}}
\def\monomial[#1]{x_{#1}}
\def\monomialAmp[#1]{y_{#1}}
\def\lengthGaGb{L}
\def\scalingParameters{\alpha}
\def\scaleAexp[#1]{a_{#1}}
\def\scaleBexp[#1]{b_{#1}}
\def\scaleEexp[#1]{a_{#1}}
\def\angleexpAll[#1]{c_{#1}}
\def\angleScaleAexp[#1]{\dot{c}_{#1}}
\def\angleScaleBexp[#1]{\ddot{c}_{#1}}
\def\arbitraryScaleE{a'}
\def\arbitraryPhaseK{c'}
\def\separationGolay[#1]{d_{#1}}
\def\angleexpAllBitD[#1]{\hat{k}_{#1}}
\def\permutationMonoD[#1]{{\hat{\pi}_{#1}}}
\def\eleGa[#1]{{a}_{#1}}
\def\eleGb[#1]{{b}_{#1}}
\def\eleGc[#1]{{c}_{#1}}
\def\eleGt[#1]{{t}_{#1}}
\def\angleexpAllBitD[#1]{\hat{k}_{#1}}
\def\permutationMonoD[#1]{{\hat{\pi}_{#1}}}
\def\eleGa[#1]{{a}_{#1}}
\def\eleGb[#1]{{b}_{#1}}
\def\eleGc[#1]{{c}_{#1}}
\def\eleGt[#1]{{t}_{#1}}
\def\apac[#1][#2]{\rho_{#1}(#2)}
\def\apacPositive[#1][#2]{\rho^{+}_{#1}(#2)}
\def\binaryAsignment[#1][#2]{b_{#1}^{(#2)}}
\def\pmfBinomial[#1][#2]{{p}_{\rm b}(#1;#2)}
\def\eleSeqf[#1]{{f}_{#1}}
\def\eleSeqg[#1]{{g}_{#1}}
\def\eleSeqcf[#1]{{c}_{f,#1}}
\def\eleSeqcg[#1]{{c}_{g,#1}}
\def\scaleA[#1]{\alpha_{#1}}
\def\scaleB[#1]{\beta_{#1}}
\def\angleGolay[#1]{\omega_{#1}}
\def\angleScaleA[#1][#2]{\dot{\omega}_{#1}^{#2}}
\def\angleScaleB[#1][#2]{\ddot{\omega}_{#1}^{#2}}
\def\permutationShift[#1]{{\psi_{#1}}}
\def\permutationMono[#1]{{\pi_{#1}}}
\def\permutationMonoPre[#1]{{\pi'_{#1}}}
\def\seqPermutationCompShift{\bm{\pi}}
\def\vecArrangement[#1]{\textbf{b}_{#1}}
\def\seqGa{\textit{\textbf{a}}}
\def\seqGb{\textit{\textbf{b}}}
\def\seqGaIt[#1]{\textit{\textbf{a}}^{(#1)}}
\def\seqGbIt[#1]{\textit{\textbf{b}}^{(#1)}}
\def\seqGf[#1]{\textit{\textbf{f}}_{#1}}
\def\seqGg[#1]{\textit{\textbf{g}}_{#1}}
\def\seqGfdot[#1]{\bar{\textit{\textbf{f}}}_{#1}}
\def\seqGgdot[#1]{\bar{\textit{\textbf{g}}}_{#1}}
\def\OFDMinTime[#1][#2]{s_{#1}(#2)}
\def\seqx{\textit{\textbf{x}}}
\def\seqGaP{A}
\def\seqGaItP[#1]{{A}^{(#1)}}
\def\seqGbItP[#1]{{B}^{(#1)}}
\def\seqSubP[#1]{{H}_{#1}}
\def\seqGfP[#1]{{{F}}_{#1}}
\def\seqGgP[#1]{{{G}}_{#1}}
\def\seqSub[#1]{\textit{\textbf{h}}_{#1}}
\def\lagForCorrelation{k}
\def\functionSpace{\mathcal{F}}
\def\functionf[#1]{p^{(#1)}}
\def\functiong[#1]{q^{(#1)}}
\def\functionfdot[#1]{{p}_{\indexIterationANF}^{(#1)}} 
\def\functiongdot[#1]{{q}_{\indexIterationANF}^{(#1)}} 
\def\funcfForFinalPhase{f_{\rm i}}
\def\funcfForFinalAmplitude{f_{\rm r}}
\def\funcfForFinalPhaseED[#1]{f_{{\rm i},#1}}
\def\funcfForFinalPhaseDecED[#1]{\check{f}_{{\rm i},#1}}
\def\funcfForFinalAmplitudeED[#1]{f_{{\rm r},#1}}
\def\funcfForFinalAmplitudeDecED[#1]{\check{f}_{{\rm r}, #1}}
\def\funcfForANF{f}
\def\funcEnum{{i}}
\def\varMonomial{{i}}
\def\funcGfForANF[#1]{f_{#1}}
\def\funcGgForANF[#1]{g_{#1}}
\def\polySeq[#1][#2]{{#1}(#2)}
\def\setMonomials[#1]{{\mathcal{M}_{#1}}}
\def\phaseOffsetl[#1]{\Delta_{#1}}
\def\phaseOffsetll[#1]{\Delta_{#1}}
\def\amplitudeOffset[#1]{\epsilon'_{#1}}
\def\amplitudeOffsetl[#1]{\epsilon_{#1}}
\def\lengthOfSequence{L}
\def\totalPowerScale{\gamma}
\def\indexEleOfSeq{i}
\def\indexIteration{n}
\def\indexFirstOrderMonomial{j}
\def\orderMonomial[#1]{k_{#1}}
\def\coeffientsANF[#1]{c_{#1}}
\def\polyVariable{z}
\def\numberOfIterations{m}
\def\numberOfPointsForPSK{H}
\def\modulationSymbolF[#1]{m_{#1}}
\def\cardinalitySetOfOperators[#1]{{H}_{#1}}
\def\transmittedSeqP{T}
\def\transmittedSeq[#1]{\textit{\textbf{t}}_{#1}}
\def\transmittedSeqEle[#1]{{t}_{#1}}
\def\receivedSeqP{R}
\def\receivedSeq[#1]{\textit{\textbf{r}}_{#1}}
\def\receivedSeqEle[#1]{{r}_{#1}}
\def\SNR{{\tt{SNR}}}
\def\channelAtSubcarrier[#1]{h_{#1}}
\def\noiseAtSubcarrier[#1]{w_{#1}}
\def\transmitPower[#1]{P_{#1}}
\def\noiseVariance{\sigma_{\rm noise}^2}
\def\numberOfEDsPlus[#1]{K^{+}_{#1}}
\def\numberOfEDsMinus[#1]{K^{-}_{#1}}
\def\numberOfEDsZero[#1]{K^{0}_{#1}}
\def\metricPlus[#1]{E^{+}_{#1}}
\def\metricMinus[#1]{E^{-}_{#1}}
\def\probabilityError[#1]{{\tt{CER}}(#1)}
\def\rate[#1]{\lambda_{#1}}
\def\Trefresh{T_\mathrm{update}}
\def\indexRound{\ell}
\def\updateRate{\mu}
\def\velocityVectorEle[#1][#2]{{u}^{(#1)}_{#2}}
\def\feedbackEle[#1][#2]{{g}^{(#1)}_{#2}}
\def\locationInitialEle[#1]{{c}_{\mathrm{i},#1}}
\def\locationTargetEle[#1]{{c}_{\mathrm{t},#1}}
\def\locationEle[#1][#2]{{{p}}^{({#1})}_{#2}}
\def\locationEstimateEle[#1][#2]{{{\tilde{p}}}^{({#1})}_{#2}}
\def\locationEstimationErrorEle[#1][#2]{{n}^{(#1)}_{#2}}
\def\varianceSensor{\sigma_{\rm sensor}^2}
\def\indexCoordinate{l}
\def\maximumVelocity{u_\text{\rm limit}}
\def\clamp[#1][#2]{\text{clamp}_{#2}\left(#1\right)}
\def\probabilityPlusDecision{p}
\def\probabilityNullDecision{z}
\def\probabilityMinusDecision{q}
\def\BibTeX{{\rm B\kern-.05em{\sc i\kern-.025em b}\kern-.08em
		T\kern-.1667em\lower.7ex\hbox{E}\kern-.125emX}}
\let\norm\undefined 
\DeclarePairedDelimiter\norm{\lVert}{\rVert}
\newcommand\mydots{\hbox to 1em{.\hss.\hss.}}
\newtheorem{theorem}{Theorem}
\newtheorem{lemma}{Lemma}
\newtheorem{proposition}{Proposition}
\newtheorem{corollary}{Corollary}
\newtheorem{example}{\color{black} Example} 
\newif\ifAC@uppercase@first%
\def\Aclp#1{\AC@uppercase@firsttrue\aclp{#1}\AC@uppercase@firstfalse}%
\def\AC@aclp#1{%
	\ifcsname fn@#1@PL\endcsname%
	\ifAC@uppercase@first%
	\expandafter\expandafter\expandafter\MakeUppercase\csname fn@#1@PL\endcsname%
	\else%
	\csname fn@#1@PL\endcsname%
	\fi%
	\else%
	\AC@acl{#1}s%
	\fi%
}%
\def\Acp#1{\AC@uppercase@firsttrue\acp{#1}\AC@uppercase@firstfalse}%
\def\AC@acp#1{%
	\ifcsname fn@#1@PL\endcsname%
	\ifAC@uppercase@first%
	\expandafter\expandafter\expandafter\MakeUppercase\csname fn@#1@PL\endcsname%
	\else%
	\csname fn@#1@PL\endcsname%
	\fi%
	\else%
	\AC@ac{#1}s%
	\fi%
}%
\def\Acfp#1{\AC@uppercase@firsttrue\acfp{#1}\AC@uppercase@firstfalse}%
\def\AC@acfp#1{%
	\ifcsname fn@#1@PL\endcsname%
	\ifAC@uppercase@first%
	\expandafter\expandafter\expandafter\MakeUppercase\csname fn@#1@PL\endcsname%
	\else%
	\csname fn@#1@PL\endcsname%
	\fi%
	\else%
	\AC@acf{#1}s%
	\fi%
}%
\def\Acsp#1{\AC@uppercase@firsttrue\acsp{#1}\AC@uppercase@firstfalse}%
\def\AC@acsp#1{%
	\ifcsname fn@#1@PL\endcsname%
	\ifAC@uppercase@first%
	\expandafter\expandafter\expandafter\MakeUppercase\csname fn@#1@PL\endcsname%
	\else%
	\csname fn@#1@PL\endcsname%
	\fi%
	\else%
	\AC@acs{#1}s%
	\fi%
}%
\edef\AC@uppercase@write{\string\ifAC@uppercase@first\string\expandafter\string\MakeUppercase\string\fi\space}%
\def\AC@acrodef#1[#2]#3{%
	\@bsphack%
	\protected@write\@auxout{}{%
		\string\newacro{#1}[#2]{\AC@uppercase@write #3}%
	}\@esphack%
}%
\def\Acl#1{\AC@uppercase@firsttrue\acl{#1}\AC@uppercase@firstfalse}
\def\Acf#1{\AC@uppercase@firsttrue\acf{#1}\AC@uppercase@firstfalse}
\def\Ac#1{\AC@uppercase@firsttrue\ac{#1}\AC@uppercase@firstfalse}
\def\Acs#1{\AC@uppercase@firsttrue\acs{#1}\AC@uppercase@firstfalse}
\acrodef{WSN}{wireless sensor network}
\acrodef{USRP}{universal software radio peripheral}
\acrodef{SN}{sensor node}
\acrodef{FC}{fusion center}
\acrodef{MAC}{multiple-access channel}
\acrodef{FL}{federated learning}
\acrodef{ED}{edge device}
\acrodef{CS}{compressed sensing}
\acrodef{ES}[BS]{base station}
\acrodef{DCN}{data center network}
\acrodef{RIS}{reconfigurable intelligent surfaces}
\acrodef{IMC}{in-memory computing}
\acrodef{FPGA}{field-programmable gate array}
\acrodef{SDR}{software-defined radio}
\acrodef{PS}{processing system}
\acrodef{SS}{soft synchronization}
\acrodef{IQ}{in-phase/quadrature}
\acrodef{IP}{intellectual property}
\acrodef{DMA}{direct-memory access}
\acrodef{RAM}{random access memory}
\acrodef{CC}{companion computer}
\acrodef{FEE}{function estimation error}
\acrodef{MSK}{minimum-shift keying}
\acrodef{TDMA}{time-domain multiple access}
\acrodef{PLNC}{physical-layer network coding}
\acrodef{UAV}{unmanned aerial vehicle}
\acrodef{LoRa}{Long-Range}
\acrodef{DC}{direct-current}
\acrodef{DAC}{digital-to-analog converter}
\acrodef{ADC}{anlog-to-digital converter}
\acrodef{CS}{complementary sequence}
\acrodef{GCP}{Golay complementary pair}
\acrodef{ANF}{algebraic normal form}
\acrodef{AACF}{aperiodic auto-correlation function}
\acrodef{RM}{Reed-Muller}
\acrodef{PUCCH}{physical uplink control channel}
\acrodef{OBO}{output-power back-off}
\acrodef{ACLR}{adjacent-channel-leakage ratio}
\acrodef{LDPC}{low-density parity check}
\acrodef{PDF}{probability density function}
\acrodef{CDF}{cummulative distribution function}
\acrodef{TBMA}{type-based multiple access}
\acrodef{MSFE}{mean-squared function error}
\acrodef{FEE}{function-estimation error}
\acrodef{CER}{computation error rate}
\acrodef{BCER}{block-computation error rate}
\acrodef{CFO}{carrier frequency offset}
\acrodef{TO}{time offset}
\acrodef{PO}{phase offset}
\acrodef{RSSI}{received signal strength  information}
\acrodef{STLC}{space-time line code}
\acrodef{CCI}{co-channel interference}
\acrodef{CSIT}[CSIT]{\ac{CSI} at the transmitter}
\acrodef{CSIR}[CSIR]{\ac{CSI} at the receiver}
\acrodef{MIMO}{multiple-input-multiple-output}
\acrodef{PC}{phase correction}
\acrodef{ZF}{zero-forcing}
\acrodef{ANOVA}{analysis of variance}
\acrodef{PCA}{principal component analysis}
\acrodef{TIG}{Technical Interest Group}
\acrodef{FSK}{frequency-shift keying}
\acrodef{PPM}{pulse-position modulation}
\acrodef{PAM}{pulse-amplitude modulation}
\acrodef{MRC}{maximum-ratio combining}
\acrodef{HP}{hard-coded participation}
\acrodef{HPA}{hard-coded participation with absentees}
\acrodef{SP}{soft-coded participation}
\acrodef{FSK-MV}{\ac{FSK}-based \ac{MV}}
\acrodef{RF}{radio-frequency}
\acrodef{MF}{matched filter}
\acrodef{PPM}{pulse-position modulation}
\acrodef{CSK}{chirp-shift keying}
\acrodef{PPM-MV}[PPM-MV]{\ac{PPM}-based \ac{MV}}
\acrodef{DFT-s-OFDM}{\ac{DFT}-spread \ac{OFDM}}
\acrodef{SC}{single-carrier}
\acrodef{SGD}{stochastic gradient descent}
\acrodef{signSGD}{sign stochastic gradient descent}
\acrodef{SL}{split learning}
\acrodef{SNR}{signal-to-noise ratio}
\acrodef{RMSE}{root-mean-squared error}
\acrodef{OFDM}{orthogonal frequency division multiplexing}
\acrodef{DFT}{discrete Fourier transform}
\acrodef{PSK}{phase-shift keying}
\acrodef{QAM}{quadrature amplitude modulation}
\acrodef{QPSK}{quadrature phase-shift keying}
\acrodef{PMEPR}{peak-to-mean envelope power ratio}
\acrodef{BER}{bit-error ratio}
\acrodef{SNR}{signal-to-noise ratio}
\acrodef{PSD}{power spectral density}
\acrodef{SE}{spectral efficiency}
\acrodef{CP}{cyclic prefix}
\acrodef{AWGN}{additive white Gaussian noise}
\acrodef{CFR}{channel frequency response}
\acrodef{CIR}{channel impulse response}
\acrodef{MMSE}{minimum mean-squared error}
\acrodef{LMMSE}{linear minimum mean-squared error}
\acrodef{BPSK}{binary phase shift keying}
\acrodef{BPSK}{quadrature phase shift keying}
\acrodef{BLER}{block-error rate}
\acrodef{ML}{maximum likelihood}
\acrodef{PHY}{physical layer}
\acrodef{PA}{power amplifier}
\acrodef{IDFT}{inverse DFT}
\acrodef{DoF}{degrees-of-freedom}
\acrodef{IoT}{Internet-of-Things}
\acrodef{FDE}{frequency-domain equalization}
\acrodef{RF}{radio-frequency}
\acrodef{IM}{index modulation}
\acrodef{MF}{matched filter}
\acrodef{PPM}{pulse-position modulation}
\acrodef{MSE}{mean-squared error}
\acrodef{MRT}{maximum-ratio transmission}
\acrodef{ERC}{equal-ratio combining}
\acrodef{BAA}{broadband analog aggregation}
\acrodef{OBDA}{one-bit broadband digital aggregation}
\acrodef{FEEL}{federated edge learning}
\acrodef{FL}{federated learning}
\acrodef{UL}{uplink}
\acrodef{DL}{downlink}
\acrodef{OAC}{over-the-air computation}
\acrodef{TCI}{truncated-channel inversion}
\acrodef{MV}{majority vote}
\acrodef{CNN}{convolution neural network}
\acrodef{ReLU}{rectified-linear unit}
\acrodef{CSI}{channel state information}
\acrodef{PAPR}{peak-to-average power ratio}
\acrodef{SC}{single-carrier}
\acrodef{iid}[IID]{independent and identically distributed}
\acrodef{RMS}{root-mean-square}
\acrodef{4G}{Fourth Generation}
\acrodef{5G}{Fifth Generation}
\acrodef{NR}{New Radio}
\acrodef{LTE}{Long-Term Evolution}
\acrodef{DFT-s-OFDM}{\ac{DFT}-spread \ac{OFDM}}
\acrodef{OFDMA}{orthogonal frequency division multiple access}
\acrodef{HARQ}{hybrid automatic repeat request}
\acrodef{D2D}{Device-to-Device}
\acrodef{NOMA}{non-orthogonal multiple access}
\acrodef{OMA}{orthogonal multiple access}
\acrodef{IMT}{International Mobile Telecommunications}
\acrodef{ITU}{International Telecommunication Union}
\begin{document}

\title{
	{Majority Vote Computation With Complementary Sequences for Distributed UAV Guidance}\\
}

\author{
	\IEEEauthorblockN{Alphan \c{S}ahin and Xiaofeng~Wang} \IEEEauthorblockA{Electrical Engineering Department, University of South Carolina, Columbia, SC, USA\\
		Emails: asahin@mailbox.sc.edu, wangxi@cec.sc.edu}
} 

\maketitle

\begin{abstract}

This study introduces a novel non-coherent \ac{OAC} scheme aimed at achieving reliable \ac{MV} calculations in fading channels. The proposed approach relies on modulating the amplitude of the elements of \acp{CS} based on the sign of the parameters to be aggregated. Notably, our method eliminates the reliance on channel state information at the nodes, rendering it compatible with time-varying channels. To demonstrate the efficacy of our approach, we employ it in a scenario where an \ac{UAV} is guided by distributed sensors, relying on the MV computed using our proposed scheme. The experimental results confirm the superiority of our approach, as evidenced by a significant reduction in computation error rates in fading channels, particularly with longer sequence lengths. Meanwhile, we ensure that the peak-to-mean-envelope power ratio of the transmitted orthogonal frequency division multiplexing signals remains within or below 3~dB.
\end{abstract}
\begin{IEEEkeywords}
Complementary sequences, OFDM, over-the-air computation, power amplifier non-linearity.
\end{IEEEkeywords}
\section{Introduction}
\acresetall

Multi-user interference is often considered an undesired situation as it can degrade the performance of communications. In contrast, the same underlying phenomenon, i.e., the signal superposition property of wireless multiple-access channels, can be very useful in the computation of special mathematical functions, i.e., the family of nomographic functions such as mean, norm, polynomial function, maximum, and \ac{MV} \cite{goldenbaum2015nomographic}. The gain obtained with \ac{OAC} is that the resource usage can be reduced to a one-time cost, which otherwise scales with the number of nodes \cite{sahinSurvey2023}. Hence, OAC can benefit applications when a large number of nodes participate in computation over a bandwidth-limited wireless channel.

OAC was first analyzed by Bobak and Gastpar \cite{Nazer_2007} and applied to communication problems under interference channels and wireless sensor networks to improve spectrum utilization \cite{goldenbaum2013harnessing}.
 Recently, it has  gained momentum  for computation-oriented applications over wireless networks such as wireless federated learning  or wireless control systems. For example, the authors in \cite{chen2021distributed,Sahin_2022MVjournal,Guangxu_2020,Guangxu_2021}  implement \ac{FL} \cite{pmlr-v54-mcmahan17a} over a wireless network, where \ac{OAC} is used for aggregating gradients or model parameters of neural networks. In \cite{Park_2021}, difference equations are proposed to be computed with \ac{OAC}. In \cite{Cai_2018}, dynamic plants are considered along with \ac{OAC}. 
In \cite{Jihoon_2022Platooning}, \ac{OAC} is utilized to achieve mean consensus for  vehicle platooning. In \cite{Xiang_ojc2022}, multi-slot coherent \ac{OAC} is investigated for an \ac{UAV} network, where the \acp{UAV} compute the arithmetic mean of ground sensor readings with OAC. Similarly, in \cite{Min_uav2022}, \ac{UAV} trajectories are optimized based on the locations of the sensors.

A major challenge for OAC is computing functions in fading channels. To overcome fading channels, a large number of studies adopt pre-equalization techniques where the parameters are distorted with the reciprocals of the channel coefficients before the transmission so that the transmitted signals superpose coherently at the receiver \cite{Amiri_2020, Guangxu_2020,Guangxu_2021}. Although this approach has its merit, it  requires precise sample-level time synchronization as coherent signal superposition is sensitive to phase synchronization errors. Also, in \cite{Haejoon_2021}, it is shown that non-stationary  channel conditions in a \ac{UAV} network can severely degrade the coherent signal superposition. To relax this bottleneck, another approach is to use non-coherent techniques at the expense of sacrificing more resources. For instance, in \cite{Sahin_2022MVjournal} and \cite{safiMVlocOJCS}, orthogonal resources are allocated and energy accumulation on the resources are used for \ac{OAC}. In \cite{Goldenbaum_2013tcom}, random sequences are proposed to be utilized for OAC and the energy of the superposed sequences is calculated at the expense of interference components. A major challenge with  non-coherent \ac{OAC} is that the reliability cannot be improved directly by increasing the signal power due to the lack of pre-equalization. Hence, non-coherent OAC requires more investigations  for reducing computation errors.

In this study, we focus on computing \ac{MV} which can be made compatible with digital modulation due to its discrete nature,  and finds applications such as distributed training \cite{Sahin_2022MVjournal,Guangxu_2020,Guangxu_2021} and distributed localization \cite{safiMVlocOJCS}. We propose  a new non-coherent \ac{OAC} scheme based on \acp{CS} \cite{sahin_2020gm} to improve the robustness of computation against fading channels while limiting the dynamic range of transmitted signals to mitigate the distortion, like clipping, due to hardware non-linearity. Since the proposed approach does not rely on the availability of \ac{CSI} at the transmitter and receiver, it also provides robustness against time-varying channels. As opposed to earlier studies in \cite{Xiang_ojc2022, Min_uav2022, Haejoon_2021}, we demonstrate the applicability of the proposed method to a scenario where a \ac{UAV} is guided by distributed sensors based on MV computation.

\section{System Model}
\label{sec:system}
Consider a scenario where a \ac{UAV} flies from one point of interest $(\locationInitialEle[1],\locationInitialEle[2],\locationInitialEle[3])$ to another point of interest $(\locationTargetEle[1],\locationTargetEle[2],\locationTargetEle[3])$ in an indoor environment. Suppose that the \ac{UAV} cannot localize its location in the room. However, it can receive feedback from $\numberOfEdgeDevices$ distributed sensors deployed in the room about the velocity of the \ac{UAV} on $x$-, $y$-, and $z$-axis for every $\Trefresh$ seconds. Based on the feedback from the sensors, the \ac{UAV} updates its position at the $\indexRound$th round for the $x$-, $y$-, and $z$-axis, denoted by $\locationEle[\indexRound][1]$, $\locationEle[\indexRound][2]$, and $\locationEle[\indexRound][3]$, respectively, as
\begin{align}
	\locationEle[\indexRound+1][\indexCoordinate] &= \locationEle[\indexRound][\indexCoordinate] -\Trefresh\velocityVectorEle[\indexRound][\indexCoordinate],~\forall\indexCoordinate\in\{1,2,3\}, \label{eq:dynamics}\\
	\velocityVectorEle[\indexRound][\indexCoordinate]&= 
	\begin{cases}
		\max\{{\updateRate\feedbackEle[\indexRound][\indexCoordinate]},-\maximumVelocity\}& \feedbackEle[\indexRound][\indexCoordinate]<0\\
		\min\{{\updateRate\feedbackEle[\indexRound][\indexCoordinate]},\maximumVelocity\}& \feedbackEle[\indexRound][\indexCoordinate]\ge0
	\end{cases}
	\nonumber,
\end{align}
 for $\locationEle[0][\indexCoordinate]\triangleq\locationInitialEle[\indexCoordinate]$. In \eqref{eq:dynamics}, $\velocityVectorEle[\indexRound][\indexCoordinate]$ is the velocity at the $\indexRound$th round for the $\indexCoordinate$th coordinate,  $\maximumVelocity>0$ is the maximum velocity of the \ac{UAV}, $\updateRate>0$ is the update rate, $\feedbackEle[\indexRound][\indexCoordinate]$ is the velocity-update strategy given by
\begin{align}
	\feedbackEle[\indexRound][\indexCoordinate]=\begin{cases}
		\frac{1}{\numberOfEdgeDevices}\sum_{\indexED=1}^{\numberOfEdgeDevices}\locationEstimateEle[\indexRound][\indexCoordinate,\indexED]-\locationTargetEle[\indexCoordinate],&\text{Ideal (Continous)}\\
		\majorityVoteEle[\indexIteration],
		&\text{Ideal (MV)}\\
		\majorityVoteDetectedEle[\indexIteration],&\text{OAC (MV)}, \eqref{eq:decision}
	\end{cases}~,
	\nonumber
\end{align}
where 
$\locationEstimateEle[\indexRound][\indexCoordinate,\indexED]=\locationEle[\indexRound][\indexCoordinate]+\locationEstimationErrorEle[\indexRound][\indexCoordinate,\indexED]$ is the estimated position of the \ac{UAV} for the $\indexCoordinate$th coordinate at the $\indexED$th sensor, $\locationEstimationErrorEle[\indexRound][\indexCoordinate,\indexED]$ is a zero-mean Gaussian variable with the variance $\varianceSensor$, $\majorityVoteEle[\indexIteration]$ is the
ideal \ac{MV} function expressed as
$
	\majorityVoteEle[\indexIteration]=\signNormal[{\sum_{\indexED=1}^{\numberOfEdgeDevices}\voteVectorEDEle[\indexED][\indexIteration]}]
$, 
for $\voteVectorEDEle[\indexED][\indexIteration]=\signNormal[{\locationEstimateEle[\indexRound][\indexIteration,\indexED]-\locationTargetEle[\indexIteration]}]$, and $\majorityVoteDetectedEle[\indexIteration]$ denotes the  \ac{MV} obtained with the proposed OAC scheme in this work. 

\subsection{Complementary Sequences}
Let $\seqGa=(\eleGa[{\indexEleOfSeq}])_{i=0}^{\lengthGaGb-1}\triangleq(\eleGa[0],\eleGa[1],\dots, \eleGa[\lengthGaGb-1])$ be a sequence  of length $\lengthGaGb$ for $\eleGa[{\indexEleOfSeq}]\in\complexNumbers$ and $\eleGa[\lengthGaGb-1]\neq0$. We associate the sequence $\seqGa$ with the polynomial
$\polySeq[\seqGaP][\polyVariable] = \eleGa[\lengthGaGb-1]\polyVariable^{\lengthGaGb-1} + \eleGa[\lengthGaGb-2]\polyVariable^{\lengthGaGb-2}+ \dots + \eleGa[0]$
in indeterminate $\polyVariable$. The \ac{AACF} of the sequence $\seqGa$ given by
\begin{align}
	\apac[\seqGa][\lagForCorrelation]\triangleq
	\begin{cases}
		\sum_{\indexEleOfSeq=0}^{\lengthGaGb-\lagForCorrelation-1} \eleGa[\indexEleOfSeq]^*\eleGa[\indexEleOfSeq+\lagForCorrelation], & 0\le\lagForCorrelation\le\lengthGaGb-1\\
		\sum_{\indexEleOfSeq=0}^{\lengthGaGb+\lagForCorrelation-1} \eleGa[\indexEleOfSeq]\eleGa[\indexEleOfSeq-\lagForCorrelation]^*, & -\lengthGaGb+1\le\lagForCorrelation<0\\
		0,& \text{otherwise}
	\end{cases}~.
\end{align}
If $\apac[\seqGa][\lagForCorrelation]+\apac[\seqGb][\lagForCorrelation] = 0$ holds for $\lagForCorrelation\neq0$, 
the sequences $\seqGa$ and $\seqGb$ are referred to as \acp{CS}. It can be shown the \ac{PMEPR} of an \ac{OFDM} symbol constructed based on a \ac{CS} is less than or equal to 3~dB \cite{davis_1999}.

Let $\funcfForANF(\seqx)$ be a map from $\integers^\numberOfIterations_2=\{\seqx\triangleq(\monomial[1],\monomial[2],\dots, \monomial[\numberOfIterations])|\forall \monomial[\indexFirstOrderMonomial]\in\integers_2\}$ to $\realNumbers$ as $\funcfForANF:\integers^\numberOfIterations_2\rightarrow\realNumbers$, i.e., a pseudo-Boolean function.
 \acp{CS} can be obtained via pseudo-Boolean functions  as follows:
\begin{theorem}[{\cite{sahin_2020gm}}]
	\label{th:reduced}
	Let 
	$\seqPermutationCompShift=(\permutationMono[\indexIteration])_{\indexIteration=1}^{\numberOfIterations}$ be a permutation of $\{1,2,\dots,\numberOfIterations\}$. For any $\numberOfPointsForPSK,\numberOfIterations\in\integersPositive$, $\scaleEexp[\indexIteration],\arbitraryScaleE\in\realNumbers$, and $\angleexpAll[\indexIteration],\arbitraryPhaseK \in \integers_\numberOfPointsForPSK$ for $\indexIteration\in\{1,2,\mydots,\numberOfIterations\}$, let
	\begin{align}
&\funcfForFinalAmplitude(\seqx)
= \sum_{\indexIteration=1}^{\numberOfIterations}\scaleEexp[\indexIteration]\monomialAmp[{\permutationMono[{\indexIteration}]}]+\arbitraryScaleE\label{eq:realPartReduced}~,
\\		
		&\funcfForFinalPhase(\seqx)
		= {\frac{\numberOfPointsForPSK}{2}\sum_{\indexIteration=1}^{\numberOfIterations-1}\monomial[{\permutationMono[{\indexIteration}]}]\monomial[{\permutationMono[{\indexIteration+1}]}]}+\sum_{\indexIteration=1}^\numberOfIterations \angleexpAll[\indexIteration]\monomial[{\permutationMono[{\indexIteration}]}]+  \arbitraryPhaseK\label{eq:imagPartReduced}~,
	\end{align}
where  $
	\monomialAmp[{\permutationMono[{\indexIteration}]}]$ is $
		(\monomial[{\permutationMono[{\indexIteration}]}] +\monomial[{\permutationMono[{\indexIteration+1}]}])_2$ and 	$\monomial[{\permutationMono[{\numberOfIterations}]}]$ for $ \indexIteration<\numberOfIterations$ and $\indexIteration=\numberOfIterations$, respectively. 	Then, the sequence $\transmittedSeq[]=(\transmittedSeqEle[0],\mydots,\transmittedSeqEle[\lengthOfSequence-1])$, where  its associated polynomial  is  given by
	\begin{align}
		\polySeq[{\transmittedSeqP}][\polyVariable] &= 
		\sum_{\forall\seqx\in\integers^\numberOfIterations_2}\underbrace{
		\constante^{\funcfForFinalAmplitude(\seqx)}
		\constante^{\constantj\frac{2\pi}{\numberOfPointsForPSK}\funcfForFinalPhase(\seqx)}  }_{\transmittedSeqEle[\funcEnum(\seqx)]} 
		\polyVariable^{\funcEnum(\seqx)}~,\label{eq:encodedFOFDMonly}
	\end{align}
	is a \ac{CS} of length $\lengthOfSequence=2^\numberOfIterations$, where $\funcEnum(\seqx) \triangleq   \sum_{\indexFirstOrderMonomial=1}^{\numberOfIterations}\monomial[\indexFirstOrderMonomial]2^{\numberOfIterations-\indexFirstOrderMonomial}$, i.e.,  a decimal representation of the binary number constructed using all elements in the sequence $\seqx$. 
\end{theorem}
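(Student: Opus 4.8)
The plan is to exhibit a mate sequence and verify the defining identity of a complementary pair, namely that the two \acp{AACF} sum to zero at every nonzero lag. Writing $\transmittedSeq[]=(\transmittedSeqEle[0],\dots,\transmittedSeqEle[\lengthOfSequence-1])$ for the sequence in \eqref{eq:encodedFOFDMonly} and using that $\funcEnum(\cdot)$ is a bijection from $\integers_2^{\numberOfIterations}$ onto $\{0,\dots,\lengthOfSequence-1\}$, each lag-$\lagForCorrelation$ term of $\apac[{\transmittedSeq[]}][\lagForCorrelation]$ is $\transmittedSeqEle[\funcEnum(\seqx)]^{*}\transmittedSeqEle[\funcEnum(\seqx')]$ for the unique pair $(\seqx,\seqx')$ with $\funcEnum(\seqx')-\funcEnum(\seqx)=\lagForCorrelation$. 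As mate I take $\seqGb$ with the \emph{same} amplitude function \eqref{eq:realPartReduced} but phase function \eqref{eq:imagPartReduced} augmented by $\tfrac{\numberOfPointsForPSK}{2}\monomial[{\permutationMono[1]}]$, i.e.\ $\eleGb[\funcEnum(\seqx)]=(-1)^{\monomial[{\permutationMono[1]}]}\transmittedSeqEle[\funcEnum(\seqx)]$. Then each lag-$\lagForCorrelation$ term of $\apac[{\transmittedSeq[]}][\lagForCorrelation]+\apac[\seqGb][\lagForCorrelation]$ carries the factor $1+(-1)^{\monomial[{\permutationMono[1]}]+\monomial[{\permutationMono[1]}]'}$, which equals $2$ when the leading permuted coordinates of $\seqx$ and $\seqx'$ agree and $0$ otherwise. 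Hence only pairs with $\monomial[{\permutationMono[1]}]=\monomial[{\permutationMono[1]}]'$ survive, and it remains to show their contributions cancel for $\lagForCorrelation\neq0$ (the two sign branches of the \ac{AACF} are conjugates and are handled identically).

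For such a surviving pair, $\seqx\neq\seqx'$, so there is a least index $\indexIteration^{\star}$ with $\monomial[{\permutationMono[{\indexIteration^{\star}}]}]\neq\monomial[{\permutationMono[{\indexIteration^{\star}}]}]'$, and $\indexIteration^{\star}\ge2$. The core step is the involution that flips the \emph{entire prefix} of permuted coordinates $\permutationMono[1],\dots,\permutationMono[{\indexIteration^{\star}-1}]$ simultaneously in $\seqx$ and in $\seqx'$. Because the same coordinates are toggled in both arguments, $\funcEnum(\seqx')-\funcEnum(\seqx)$ is unchanged, so the map stays within the lag-$\lagForCorrelation$ pairs; it also preserves $\monomial[{\permutationMono[1]}]=\monomial[{\permutationMono[1]}]'$ and leaves $\indexIteration^{\star}$ invariant (the prefix still agrees and the first disagreement is still at $\indexIteration^{\star}$). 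Since $\indexIteration^{\star}-1\ge1$, a nonempty set of coordinates is always flipped, so the map is fixed-point free; I would then pair each surviving term with its image and show the two are negatives.

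It is precisely the edge-variable design $\monomialAmp[{\permutationMono[\indexIteration]}]=(\monomial[{\permutationMono[\indexIteration]}]+\monomial[{\permutationMono[{\indexIteration+1}]}])_{2}$ that makes the amplitudes match under this flip. Toggling the prefix leaves every $\monomialAmp[{\permutationMono[\indexIteration]}]$ with $\indexIteration\le\indexIteration^{\star}-2$ unchanged (both endpoints of that edge flip) and affects only the boundary edge $\monomialAmp[{\permutationMono[{\indexIteration^{\star}-1}]}]$; since $\monomial[{\permutationMono[{\indexIteration^{\star}}]}]$ and $\monomial[{\permutationMono[{\indexIteration^{\star}}]}]'$ differ while $\monomial[{\permutationMono[{\indexIteration^{\star}-1}]}]=\monomial[{\permutationMono[{\indexIteration^{\star}-1}]}]'$, the induced changes of $\funcfForFinalAmplitude(\seqx)$ and $\funcfForFinalAmplitude(\seqx')$ have opposite sign and cancel in the sum $\funcfForFinalAmplitude(\seqx)+\funcfForFinalAmplitude(\seqx')$, so the magnitude product $|\transmittedSeqEle[\funcEnum(\seqx)]|\,|\transmittedSeqEle[\funcEnum(\seqx')]|$ is invariant. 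For the phase, the flip alters the linear part of \eqref{eq:imagPartReduced} and its prefix-internal quadratic terms identically on $\seqx$ and $\seqx'$ (these involve only prefix coordinates, on which the two agree), so they cancel in the induced change of $\funcfForFinalPhase(\seqx')-\funcfForFinalPhase(\seqx)$; only the boundary term $\tfrac{\numberOfPointsForPSK}{2}\monomial[{\permutationMono[{\indexIteration^{\star}-1}]}]\monomial[{\permutationMono[{\indexIteration^{\star}}]}]$ survives, shifting $\tfrac{2\pi}{\numberOfPointsForPSK}[\funcfForFinalPhase(\seqx')-\funcfForFinalPhase(\seqx)]$ by exactly $\pi$ and multiplying the term by $\constante^{\constantj\pi}=-1$. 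Combining, $\transmittedSeqEle[\funcEnum(\seqx)]^{*}\transmittedSeqEle[\funcEnum(\seqx')]$ is sent to its negative, the surviving sum vanishes, and $\transmittedSeq[]$ is a \ac{CS}.

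I expect the main obstacle to be discovering this involution and verifying its compatibility with the amplitude modulation. The classical Davis--Jedwab argument for unimodular sequences toggles a single coordinate, which here would corrupt the $\scaleEexp[{\indexIteration^{\star}-2}]$ amplitude term; only flipping the full prefix keeps all edges but the single boundary edge fixed, which is exactly what the XOR structure $\monomialAmp[{\permutationMono[\indexIteration]}]=(\monomial[{\permutationMono[\indexIteration]}]+\monomial[{\permutationMono[{\indexIteration+1}]}])_{2}$ is engineered to permit. The remaining bookkeeping --- the boundary cases $\indexIteration^{\star}=2$ and $\indexIteration^{\star}=\numberOfIterations$, the fact that $\monomialAmp[{\permutationMono[\numberOfIterations]}]=\monomial[{\permutationMono[\numberOfIterations]}]$ is never the boundary edge, and confirming that $\indexIteration^{\star}$ is genuinely preserved so the map is a well-defined involution --- is routine but should be checked explicitly.
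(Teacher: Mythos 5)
Your proposal is correct, but it does not follow the route behind the statement: this paper offers no proof at all --- Theorem~\ref{th:reduced} is imported verbatim from \cite{sahin_2020gm}, where complementarity is established by induction through a Golay/Budi\v{s}in-type recursion on polynomial pairs (iterates $A^{(n)}(z)$, $B^{(n)}(z)$ satisfying $\rho_{A^{(n)}}(k)+\rho_{B^{(n)}}(k)=2\bigl(\rho_{A^{(n-1)}}(k)+\rho_{B^{(n-1)}}(k)\bigr)$, so complementarity of the seed propagates in one line), after which the stated amplitude and phase functions --- including the Gray-coded edge variables $y_{\pi_n}=(x_{\pi_n}+x_{\pi_{n+1}})_2$ --- are extracted as the algebraic normal form of the recursion's output. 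Your direct Davis--Jedwab-style argument is a genuinely different and valid alternative, and I checked its key steps: the mate $b_{i(\seqx)}=(-1)^{x_{\pi_1}}t_{i(\seqx)}$ kills all lag-$k$ pairs with $x_{\pi_1}\neq x'_{\pi_1}$; flipping the full prefix $\pi_1,\dots,\pi_{n^\star-1}$ in both arguments preserves the lag (the prefix coordinates of $\seqx$ and $\seqx'$ agree), preserves $n^\star$ and the constraint $x_{\pi_1}=x'_{\pi_1}$, is fixed-point free, leaves every edge variable unchanged except the single boundary edge $y_{\pi_{n^\star-1}}$ (and $y_{\pi_m}=x_{\pi_m}$ can never be that edge since $n^\star\le m$), which flips oppositely in $\seqx$ and $\seqx'$ because $y_{\pi_{n^\star-1}}(\seqx')=1-y_{\pi_{n^\star-1}}(\seqx)$ --- so the amplitude product $e^{f_{\rm r}(\seqx)+f_{\rm r}(\seqx')}$ is invariant while the boundary quadratic term shifts the relative phase by $\pm\pi$ (your ``exactly $\pi$'' is off by a sign, immaterially, since $e^{\pm {\rm j}\pi}=-1$), making each two-element orbit cancel. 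You also correctly identify why the single-coordinate toggle of the unimodular case fails here: it perturbs the $a_{n^\star-2}$ edge identically in both arguments, corrupting the amplitude product, whereas the prefix flip is precisely what the XOR edge structure accommodates. As for what each approach buys: the recursion proof of \cite{sahin_2020gm} is shorter on complementarity, generalizes to arbitrary seed pairs and non-power-of-two lengths, and directly yields the encoder structure, at the cost of ANF bookkeeping to recover the closed form \eqref{eq:realPartReduced}--\eqref{eq:imagPartReduced}; your proof is self-contained, works immediately on the stated closed form, exhibits the explicit mate, and illuminates why amplitude modulation on the Gray-coded variables preserves complementarity, but it is tied to this specific construction.
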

Theorem~\ref{th:reduced} shows that the functions that determine the amplitude and the phase of the elements of the CS $\transmittedSeq[]$ (i.e., $\funcfForFinalPhase(\seqx)$ and $\funcfForFinalPhase(\seqx)$) and \ac{RM} codes have similar structures. The function $\funcfForFinalPhase(\seqx)$ is in the form of the cosets of the first-order \ac{RM} code within the second-order \ac{RM} code  \cite{davis_1999}. Notice that the mapping between $\{(\monomialAmp[1],\mydots,\monomialAmp[\numberOfIterations])\}$ and $\{(\monomial[1],\mydots,\monomial[\numberOfIterations])\}$ is bijective and results in a Gray code when the elements of the set $\{(\monomial[1],\mydots,\monomial[\numberOfIterations])\}$ is ordered  lexicographically~\cite{sahin_2020gm}. Hence, the function $\funcfForFinalAmplitude(\seqx)$ is also similar to the first-order \ac{RM} code, except that the operations occur in $\realNumbers$.

\subsection{Signal model and wireless channel}
We assume that the sensors and the \ac{UAV}  are equipped with a single antenna.  Let  $\transmittedSeq[\indexED]=(\transmittedSeqEle[\indexED,0],\mydots,\transmittedSeqEle[\indexED,\lengthOfSequence-1])$ be a  \ac{CS} of length $\lengthOfSequence$ transmitted from the $\indexED$th sensor over an \ac{OFDM} symbol by mapping its elements to a set of contiguous subcarriers. Assuming that all sensors access the wireless channel simultaneously and the \ac{CP} duration is larger than the sum of maximum time-synchronization error and maximum-excess delay of the channel, we can express the polynomial representation of the received sequence $\receivedSeq[]=(\receivedSeqEle[0],\mydots,\receivedSeqEle[\lengthOfSequence-1])$ at the \ac{UAV} after the signal superposition as
\begin{align}
\polySeq[{\receivedSeqP}][\polyVariable] &=\sum_{\varMonomial=0}^{\lengthGaGb-1} \underbrace{\left(\sum_{\indexED=1}^{\numberOfEdgeDevices}
\channelAtSubcarrier[\indexED,\varMonomial] \sqrt{\transmitPower[\indexED]}
\transmittedSeqEle[\indexED,\varMonomial]+\noiseAtSubcarrier[\varMonomial]\right)}_{\receivedSeqEle[\varMonomial]}
\polyVariable^{\varMonomial}~,	
	\label{eq:symbolOnSubcarrier}
\end{align}
where  $\channelAtSubcarrier[\indexED,\varMonomial]\sim\complexGaussian[0][1]$ is the Rayleigh fading channel coefficient between the \ac{UAV} and the $\indexED$th sensor for the $\varMonomial$th element of the sequence unless otherwise stated, $\transmitPower[\indexED]$ is the average transmit power, and ${\noiseAtSubcarrier[\varMonomial]}\sim\complexGaussian[0][\noiseVariance]$ is the \ac{AWGN}. We assume that the {\em average} received signal powers of the sensors at the \ac{UAV} are aligned with a power control mechanism \cite{10.5555/3294673}. Hence, the relative positions of the sensors with respect to the UAV does not effect our analyses. Without loss of generality, we set $\transmitPower[\indexED]$, $\forall\indexED$, to $1$~Watt and calculate the \ac{SNR} of a sensor at the \ac{UAV} receiver as $\SNR=1/\noiseVariance$.

\subsection{Problem Statement}

Suppose that the fading coefficient $\channelAtSubcarrier[\indexED,\varMonomial]$  is not available at the $\indexED$th sensor or the \ac{UAV} due to the synchronization impairments, reciprocity calibration errors, or  mobility. Under this constraint, 
for $\indexIteration\in\{1,\mydots,\numberOfIterations\}$, the main objective of the \ac{UAV} is to compute the \ac{MV} $\majorityVoteEle[\indexIteration]$, $\forall\indexIteration$ by exploiting the signal superposition property of the \ac{MAC}, where $\voteVectorEDEle[\indexED][\indexIteration]\in\{-1,0,1\}$ is the $\indexED$th sensor's vote for the $\indexIteration$th \ac{MV} computation and $\majorityVoteEle[\indexIteration]\in\{-1,1\}$ is the $\indexIteration$th \ac{MV}. 
It is worth emphasizing that the $\indexED$th sensor does not participate in the \ac{MV} computation for $\voteVectorEDEle[\indexED][\indexIteration]=0$. 
Since we consider a single UAV in this work,  we set $\voteVectorEDEle[\indexED][\indexIteration]=0$ for $\indexIteration\in\{3,4,\mydots,\numberOfIterations\}$. Note that, in the literature, absentee votes are shown to be useful  for addressing data heterogeneity for distributed training scenarios \cite{Sahin_2022MVjournal}. 

Our main goal is to obtain a scheme that computes the MVs with a low \ac{CER} while maintaining the \ac{PMEPR} of the transmitted signals as low as possible, where we define the \ac{CER} as the probability of incorrect detection of the $\indexIteration$th \ac{MV} as
$\probability[{\majorityVoteDetectedEle[\indexIteration]\neq\majorityVoteEle[\indexIteration]}]$.
Although the \acp{CS} generated with Theorem~\ref{th:reduced} can address the \ac{PMEPR} challenge by keeping it at most $3$~dB, it is not trivial to use them for the \ac{MV} computation. Therefore, the question that we address is {\em how can we use Theorem~\ref{th:reduced} to develop a reliable \ac{OAC} scheme for computing \ac{MV} without using the \ac{CSI} at the sensors and the \ac{UAV}?}

\section{Proposed Scheme}
\label{sec:scheme}
The proposed scheme modulates of the amplitude of the elements of the \ac{CS} via $\funcfForFinalAmplitude(\seqx)$ as a function of the votes $\voteVectorED[\indexED]\triangleq(\voteVectorEDEle[\indexED][1],\mydots,\voteVectorEDEle[\indexED][\numberOfIterations])$ at the $\indexED$th sensor. To this end,  based on Theorem~\ref{th:reduced}, let us denote the functions used at $\indexED$th sensor as $\funcfForFinalAmplitudeED[\indexED](\seqx)$ and $\funcfForFinalPhaseED[\indexED](\seqx)$, and their parameters as $\{\arbitraryScaleE_\indexED,\scaleEexp[\indexED,1],\mydots,\scaleEexp[\indexED,\numberOfIterations]\}$ and $\{\arbitraryPhaseK_\indexED,\angleexpAll[\indexED,1],\mydots,\angleexpAll[\indexED,\numberOfIterations]\}$, respectively. To synthesize the transmitted sequence $\transmittedSeq[\indexED]$ of length $\lengthGaGb=2^\numberOfIterations$, we use a fixed permutation $\seqPermutationCompShift$ and  map $\voteVectorEDEle[\indexED][\indexIteration]$ to $\scaleEexp[\indexED,\indexIteration]$ as 
\begin{align}
	\scaleEexp[\indexED,\indexIteration]=\begin{cases}
		-\scalingParameters,&\voteVectorEDEle[\indexED][\indexIteration]=-1\\
		0,&\voteVectorEDEle[\indexED][\indexIteration]=0\\
		+\scalingParameters,&\voteVectorEDEle[\indexED][\indexIteration]=+1
	\end{cases},~\forall\indexIteration~,
\label{eq:modulation}
\end{align}
where $\scalingParameters>0$ is a scaling parameter. To ensure that the squared $\ell_2$-norm of the \ac{CS} $\transmittedSeq[\indexED]$ is  $2^\numberOfIterations$, i.e., $\norm{\transmittedSeq[\indexED]}_2^2=2^\numberOfIterations$, we choose $\arbitraryScaleE_\indexED$ as
\begin{align}
\arbitraryScaleE_\indexED =- \frac{1}{2}\sum_{\indexIteration=1}^{\numberOfIterations}\ln{\frac{1+\constante^{2\scaleEexp[\indexIteration]}}{2}}.
\label{eq:normalizationCoef}
\end{align}
To derive \eqref{eq:normalizationCoef}, notice that $\scaleEexp[\indexED,\indexIteration]$ scales $2^{\numberOfIterations-1}$ elements of the \ac{CS} by $\constante^{\scaleEexp[\indexED,\indexIteration]}$ in \eqref{eq:realPartReduced}. Therefore, $\norm{\transmittedSeq[\indexED]}_2^2$ is scaled by ${(1+\constante^{2\scaleEexp[\indexED,\indexIteration]}})/{2}$. By considering $\scaleEexp[\indexED,1],\mydots,\scaleEexp[\indexED,\numberOfIterations]$, the total scaling can be calculated as $\totalPowerScale=\prod_{\indexIteration=1}^{\numberOfIterations}{(1+\constante^{2\scaleEexp[\indexED,\indexIteration]}})/{2}$. Thus, $\constante^{2\arbitraryScaleE_\indexED}=1/\totalPowerScale$ must hold for $\norm{\transmittedSeq[\indexED]}_2^2=2^\numberOfIterations$, which results in \eqref{eq:normalizationCoef}.

With \eqref{eq:modulation} and \eqref{eq:normalizationCoef}, if $\voteVectorEDEle[\indexED][\indexIteration]\neq0$ for $\scalingParameters\rightarrow\infty$, one half of elements (i.e., the ones for $
\monomialAmp[{\permutationMono[{\indexIteration}]}]=0$)  of the \ac{CS} $\transmittedSeq[\indexED]$  are set to $0$  while the other half (i.e., the ones for $
\monomialAmp[{\permutationMono[{\indexIteration}]}]=1$) are scaled by a factor of $\sqrt{2}$ and the sign of $\voteVectorEDEle[\indexED][\indexIteration]$ determines which half is amplified. For $\voteVectorEDEle[\indexED][\indexIteration]=0$, the halves are not scaled.
\begin{example}
	\label{ex:sequences} \rm
\begin{table}[ht]
		\caption{An example of encoded CSs based on votes for $\numberOfIterations=3$.}
		\centering
\resizebox{\textwidth*2/4}{!}{	
	\begin{tabular}{l|c|c|c|c|c|c|c|c}
		$\voteVectorED[\indexED]$ & $\transmittedSeqEle[\indexED,0]$ & $\transmittedSeqEle[\indexED,1]$ & $\transmittedSeqEle[\indexED,2]$ & $\transmittedSeqEle[\indexED,3]$  & $\transmittedSeqEle[\indexED,4]$  & $\transmittedSeqEle[\indexED,5]$ & $\transmittedSeqEle[\indexED,6]$ & $\transmittedSeqEle[\indexED,7]$ \\\hline
	$(0, 0, 0)$	& $1$ & $1$ & $1$ & $-1$ & $1$ & $1$ & $-1$ & $1$	\\	
	$(1, 0, 0)$	& $0$ & $\sqrt{2}$ & $\sqrt{2}$ & $0$ & $0$ & $\sqrt{2}$ & $-\sqrt{2}$ & $0$  \\
	$(1, 1, 0)$	& $0$ & $0$ & $2$ & $0$ & $0$ & $2$ & $0$ & $0$  \\
	$(1, 1, 1)$	& $0$ & $0$ & $0$ & $0$ & $0$ & $2\sqrt{2}$ & $0$ & $0$  \\
	$(1, 1, -1)$& $0$ & $0$ & $2\sqrt{2}$ & $0$ & $0$ & $0$ & $0$ & $0$\\
	$(1, -1, 0)$ & $0$ & $2$ & $0$ & $0$ & $0$ & $0$ & $-2$ & $0$  \\
	$(-1, 0, 0)$ & $\sqrt{2}$ & $0$ & $0$ & $-\sqrt{2}$ & $\sqrt{2}$ & $0$ & $0$ & $\sqrt{2}$  \\	
	\end{tabular}
}
\label{table:example}
\end{table}
Let $\seqPermutationCompShift=(3,2,1)$, $\numberOfPointsForPSK=2$, $\numberOfIterations=3$, $\angleexpAll[\indexIteration]=\arbitraryPhaseK=0$, $\forall\indexIteration$. Hence, the indices of the scaled elements are controlled by $\monomialAmp[{1}]=\monomial[{1}]$, $\monomialAmp[{2}]=(\monomial[{1}] +\monomial[{2}])_2$, and $\monomialAmp[3]=(\monomial[2] +\monomial[3])_2$ when $(\monomial[1],\monomial[2],\monomial[3])$ is listed in lexicographic order, i.e., $(0,0,0),(0,0,1),\mydots,(1,1,1)$.  The encoded \acp{CS} for  several realizations of  $\voteVectorED[\indexED]$ for $\scalingParameters\rightarrow\infty$ are given in \tablename~\ref{table:example}. For $\voteVectorED[\indexED]=(1, 0, 0)$ and $\voteVectorED[\indexED]=(-1, 0, 0)$, four elements determined by $\monomialAmp[3]$ of the uni-modular \ac{CS}  is scaled by $\sqrt{2}$, and the rest is multiplied with $0$. 
Similarly, for $\voteVectorED[\indexED]=(1, 1, 1)$ and $\voteVectorED[\indexED]=(1, 1, -1)$, four elements of the \ac{CS} (i.e., the CS for $\voteVectorED[\indexED]=(1, 1, 0)$) is scaled by $\sqrt{2}$, and the rest is multiplied with $0$. It is worth noting that if all the votes are non-zero, only one of the eight elements of the sequence is non-zero. 
\end{example}

For the proposed scheme, the values for  $\arbitraryPhaseK_\indexED,\angleexpAll[\indexED,1],\mydots,\angleexpAll[\indexED,\numberOfIterations]$ are chosen randomly from the set $\integers_\numberOfPointsForPSK$ for the randomization of $\transmittedSeq[\indexED]$ across the sensors. This choice is also in line with the cases where phase synchronization cannot be maintained in the network.

Based on \eqref{eq:symbolOnSubcarrier}, the received sequence at the \ac{UAV} after signal superposition can be expressed as
\small\begin{align}
	\polySeq[{\receivedSeqP}][\polyVariable] &=\sum_{\forall\seqx\in\integers^\numberOfIterations_2} \underbrace{\left(\sum_{\indexED=1}^{\numberOfEdgeDevices}
		\channelAtSubcarrier[\indexED,\funcEnum(\seqx)]
		\constante^{\funcfForFinalAmplitudeED[\indexED](\seqx)}
		\constante^{\constantj\frac{2\pi}{\numberOfPointsForPSK}\funcfForFinalPhaseED[\indexED](\seqx)}+\noiseAtSubcarrier[\funcEnum(\seqx)]\right)}_{\receivedSeqEle[\funcEnum(\seqx)]}
	\polyVariable^{\funcEnum(\seqx)}~,	
	\label{eq:super}
\end{align}\normalsize
To compute the $\indexIteration$th \ac{MV}, the \ac{UAV} calculates two metrics given by
\small
\begin{align}
\metricPlus[\indexIteration] &\triangleq \sum_{\substack{\forall\seqx\in\integers^\numberOfIterations_2\\\monomialAmp[{\permutationMono[{\indexIteration}]}]=1}}|\receivedSeqEle[\funcEnum(\seqx)]|^2=\sum_{\substack{\forall\seqx\in\integers^\numberOfIterations_2\\\monomialAmp[{\permutationMono[{\indexIteration}]}]=1}} \left|\sum_{\indexED=1}^{\numberOfEdgeDevices}	\channelAtSubcarrier[\indexED,\funcEnum(\seqx)] 	\constante^{\funcfForFinalAmplitudeED[\indexED](\seqx)}	\constante^{\constantj\frac{2\pi}{\numberOfPointsForPSK}\funcfForFinalPhaseED[\indexED](\seqx)}+\noiseAtSubcarrier[\funcEnum(\seqx)]\right|^2\nonumber,
\end{align}\normalsize
and\small
\begin{align}
	\metricMinus[\indexIteration] &\triangleq \sum_{\substack{\forall\seqx\in\integers^\numberOfIterations_2\\\monomialAmp[{\permutationMono[{\indexIteration}]}]=0}}|\receivedSeqEle[\funcEnum(\seqx)]|^2=\sum_{\substack{\forall\seqx\in\integers^\numberOfIterations_2\\\monomialAmp[{\permutationMono[{\indexIteration}]}]=0}} \left|\sum_{\indexED=1}^{\numberOfEdgeDevices}	\channelAtSubcarrier[\indexED,\funcEnum(\seqx)] 	\constante^{\funcfForFinalAmplitudeED[\indexED](\seqx)}	\constante^{\constantj\frac{2\pi}{\numberOfPointsForPSK}\funcfForFinalPhaseED[\indexED](\seqx)}+\noiseAtSubcarrier[\funcEnum(\seqx)]\right|^2\nonumber.
\end{align}\normalsize
It then detects the $\indexIteration$th \ac{MV} by comparing the values of $\metricPlus[\indexIteration]$ and $\metricMinus[\indexIteration]$ as
\begin{align}
	\majorityVoteDetectedEle[\indexIteration]=\signNormal[{\metricPlus[\indexIteration]-\metricMinus[\indexIteration]}],\forall\indexIteration~.
	\label{eq:decision}
\end{align}
We discuss how \eqref{eq:decision} allow the UAV to detect the correct \ac{MV} in the average sense in the following subsection.

\subsection{How Does It Work without CSI?}
Let $\numberOfEDsPlus[\indexIteration]$, $\numberOfEDsMinus[\indexIteration]$, and $\numberOfEDsZero[\indexIteration]$ be the number of sensors with positive, negative, and zero votes for $\indexIteration$th \ac{MV} computation, respectively. 
\begin{lemma}\rm
$\expectationOperator[{\metricPlus[\indexIteration]}][]$ and $\expectationOperator[{\metricMinus[\indexIteration]}][]$ can be calculated as
\begin{align}
	\expectationOperator[{\metricPlus[\indexIteration]}][]
&=\frac{2^\numberOfIterations\constante^{2\scalingParameters}\numberOfEDsPlus[\indexIteration]}{1+\constante^{2\scalingParameters}}+\frac{2^\numberOfIterations\constante^{-2\scalingParameters}\numberOfEDsMinus[\indexIteration]}{1+\constante^{-2\scalingParameters}}+{2^{\numberOfIterations-1}}(\numberOfEDsZero[\indexIteration]+\noiseVariance)	\nonumber,
\\
\expectationOperator[{\metricMinus[\indexIteration]}][]
&=\frac{2^\numberOfIterations\numberOfEDsPlus[\indexIteration]}{1+\constante^{2\scalingParameters}}+\frac{2^\numberOfIterations\numberOfEDsMinus[\indexIteration]}{1+\constante^{-2\scalingParameters}}+{2^{\numberOfIterations-1}}(\numberOfEDsZero[\indexIteration]+\noiseVariance)	\nonumber,
\end{align}
respectively, where the expectation is over the distribution of channel and noise.
\label{lemma:exp}
\end{lemma}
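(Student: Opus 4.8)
The plan is to reduce the computation to a per-element expectation and then exploit the product structure of the amplitude function in \eqref{eq:realPartReduced} together with the normalization in \eqref{eq:normalizationCoef}. First I would evaluate $\expectationOperator[{|\receivedSeqEle[\funcEnum(\seqx)]|^2}][]$ for a single fixed $\seqx$. Since the coefficients $\channelAtSubcarrier[\indexED,\funcEnum(\seqx)]\sim\complexGaussian[0][1]$ are independent and zero-mean across the sensors and $\noiseAtSubcarrier[\funcEnum(\seqx)]\sim\complexGaussian[0][\noiseVariance]$ is independent of them, every cross term in the expansion of the squared modulus vanishes in expectation. Because the phase factors $\constante^{\constantj\frac{2\pi}{\numberOfPointsForPSK}\funcfForFinalPhaseED[\indexED](\seqx)}$ have unit modulus and $\expectationOperator[{|\channelAtSubcarrier[\indexED,\funcEnum(\seqx)]|^2}][]=1$, this gives
\begin{align}
\expectationOperator[{|\receivedSeqEle[\funcEnum(\seqx)]|^2}][]=\sum_{\indexED=1}^{\numberOfEdgeDevices}\constante^{2\funcfForFinalAmplitudeED[\indexED](\seqx)}+\noiseVariance\nonumber.
\end{align}

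Next I would substitute this into the definitions of $\metricPlus[\indexIteration]$ and $\metricMinus[\indexIteration]$ and interchange the sum over sensors $\indexED$ with the sum over sequences $\seqx$ obeying $\monomialAmp[{\permutationMono[{\indexIteration}]}]=1$ (respectively $=0$). Because fixing the single coordinate $\monomialAmp[{\permutationMono[{\indexIteration}]}]$ leaves $2^{\numberOfIterations-1}$ admissible sequences, the noise contributes $2^{\numberOfIterations-1}\noiseVariance$ to each metric. The core task then reduces to evaluating, for each sensor, the partial sum $\sum_{\seqx:\,\monomialAmp[{\permutationMono[{\indexIteration}]}]=c}\constante^{2\funcfForFinalAmplitudeED[\indexED](\seqx)}$ for $c\in\{0,1\}$.

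The step I expect to be the main obstacle is the factorization of this partial sum and its cancellation against the normalization constant. Writing $\constante^{2\funcfForFinalAmplitudeED[\indexED](\seqx)}=\constante^{2\arbitraryScaleE_\indexED}\prod_{\indexIterationOther=1}^{\numberOfIterations}\constante^{2\scaleEexp[\indexED,\indexIterationOther]\monomialAmp[{\permutationMono[{\indexIterationOther}]}]}$ and using that $\seqx\mapsto(\monomialAmp[{\permutationMono[{1}]}],\dots,\monomialAmp[{\permutationMono[{\numberOfIterations}]}])$ is a bijection onto $\integers^\numberOfIterations_2$ (the Gray-code property noted after Theorem~\ref{th:reduced}), the sum factorizes over the free coordinates: each unconstrained coordinate $\indexIterationOther\neq\indexIteration$ yields a factor $(1+\constante^{2\scaleEexp[\indexED,\indexIterationOther]})$, while the constrained coordinate yields $\constante^{2\scaleEexp[\indexED,\indexIteration]}$ for $c=1$ and $1$ for $c=0$. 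Inserting $\constante^{2\arbitraryScaleE_\indexED}=\prod_{\indexIterationOther=1}^{\numberOfIterations}2/(1+\constante^{2\scaleEexp[\indexED,\indexIterationOther]})$ from \eqref{eq:normalizationCoef}, the $(1+\constante^{2\scaleEexp[\indexED,\indexIterationOther]})$ factors of the free coordinates cancel against the normalization, leaving
\begin{align}
\sum_{\seqx:\,\monomialAmp[{\permutationMono[{\indexIteration}]}]=1}\constante^{2\funcfForFinalAmplitudeED[\indexED](\seqx)}=\frac{2^{\numberOfIterations}\constante^{2\scaleEexp[\indexED,\indexIteration]}}{1+\constante^{2\scaleEexp[\indexED,\indexIteration]}}\nonumber,
\end{align}
and the same expression with numerator $2^{\numberOfIterations}$ for the case $c=0$. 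The care required here is to confirm that exactly the $\indexIteration$th factor survives after the cancellation.

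Finally I would partition the sensors by their vote $\voteVectorEDEle[\indexED][\indexIteration]$ through the encoding \eqref{eq:modulation}: the $\numberOfEDsPlus[\indexIteration]$ sensors with $\scaleEexp[\indexED,\indexIteration]=+\scalingParameters$, the $\numberOfEDsMinus[\indexIteration]$ sensors with $\scaleEexp[\indexED,\indexIteration]=-\scalingParameters$, and the $\numberOfEDsZero[\indexIteration]$ sensors with $\scaleEexp[\indexED,\indexIteration]=0$ (for which the per-sensor term reduces to $2^{\numberOfIterations-1}$). Summing the three groups and adding the noise contribution $2^{\numberOfIterations-1}\noiseVariance$ then produces the stated closed forms for both $\expectationOperator[{\metricPlus[\indexIteration]}][]$ and $\expectationOperator[{\metricMinus[\indexIteration]}][]$.
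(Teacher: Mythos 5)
Your proof is correct, and its skeleton matches the paper's proof of Lemma~\ref{lemma:exp} step for step: cross terms vanish in the per-element expectation, the noise contributes $2^{\numberOfIterations-1}\noiseVariance$, the problem reduces to per-sensor partial sums, and the sensors are finally partitioned into the $\numberOfEDsPlus[\indexIteration]$, $\numberOfEDsMinus[\indexIteration]$, and $\numberOfEDsZero[\indexIteration]$ groups. Where you genuinely diverge is in how the central identity $\sum_{\seqx:\,\monomialAmp[{\permutationMono[{\indexIteration}]}]=c}\constante^{2\funcfForFinalAmplitudeED[\indexED](\seqx)}=2^\numberOfIterations\constante^{2c\scaleEexp[\indexED,\indexIteration]}/\bigl(1+\constante^{2\scaleEexp[\indexED,\indexIteration]}\bigr)$ is established. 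The paper isolates this as Proposition~\ref{pro:sum} and proves it indirectly: it first notes that the two partial sums differ by exactly the factor $\constante^{2\scaleEexp[\indexED,\indexIteration]}$, then splits the already-guaranteed norm $\norm{\transmittedSeq[\indexED]}_2^2=2^\numberOfIterations$ into the two halves and solves the resulting relation; the normalization \eqref{eq:normalizationCoef} is used only as a black box. You instead compute the partial sum head-on: you factorize $\constante^{2\funcfForFinalAmplitudeED[\indexED](\seqx)}$ over the Gray-code coordinates $\monomialAmp[{\permutationMono[{1}]}],\dots,\monomialAmp[{\permutationMono[{\numberOfIterations}]}]$ (using the bijectivity noted after Theorem~\ref{th:reduced}), sum each free coordinate to a factor $\bigl(1+\constante^{2\scaleEexp[\indexED,\indexIterationOther]}\bigr)$, and cancel these against the explicit form $\constante^{2\arbitraryScaleE_\indexED}=\prod_{\indexIterationOther}2/\bigl(1+\constante^{2\scaleEexp[\indexED,\indexIterationOther]}\bigr)$ of the normalization constant. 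The two derivations are equivalent in substance, and your cancellation bookkeeping (only the $\indexIteration$th factor survives) is right. The paper's route is shorter because it reuses the norm property as a given; yours is more self-contained and more transparent about the mechanism, since it never invokes the norm as a separate fact and in effect re-derives it along the way.
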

To prove Lemma~\ref{lemma:exp}, we first need the following proposition:
\begin{proposition}\rm 
	The following identities hold:
	\begin{align}
		\sum_{\substack{\forall\seqx\in\integers^\numberOfIterations_2\\\monomialAmp[{\permutationMono[{\indexIteration}]}]=1}}
		\constante^{2\funcfForFinalAmplitudeED[\indexED](\seqx)}=\constante^{2\scaleEexp[\indexED,\indexIteration]}\sum_{\substack{\forall\seqx\in\integers^\numberOfIterations_2\\\monomialAmp[{\permutationMono[{\indexIteration}]}]=0}}
		\constante^{2\funcfForFinalAmplitudeED[\indexED](\seqx)}=\frac{\constante^{2\scaleEexp[\indexED,\indexIteration]}}{1+\constante^{2\scaleEexp[\indexED,\indexIteration]}}2^\numberOfIterations~.\nonumber
	\end{align}
	\label{pro:sum}
\end{proposition}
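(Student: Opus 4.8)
The plan is to exploit the affine structure of $\funcfForFinalAmplitudeED[\indexED]$ in the transformed variables $\monomialAmp[{\permutationMono[{1}]}],\mydots,\monomialAmp[{\permutationMono[{\numberOfIterations}]}]$ together with the bijectivity of the map $\seqx\mapsto(\monomialAmp[{\permutationMono[{1}]}],\mydots,\monomialAmp[{\permutationMono[{\numberOfIterations}]}])$ on $\integers^\numberOfIterations_2$ noted after Theorem~\ref{th:reduced}. From \eqref{eq:realPartReduced}, the exponent $\funcfForFinalAmplitudeED[\indexED](\seqx)=\sum_{\indexIterationOther=1}^{\numberOfIterations}\scaleEexp[\indexED,\indexIterationOther]\monomialAmp[{\permutationMono[{\indexIterationOther}]}]+\arbitraryScaleE_\indexED$ is affine in these variables, so the summand factorizes as
\begin{align}
\constante^{2\funcfForFinalAmplitudeED[\indexED](\seqx)}=\constante^{2\arbitraryScaleE_\indexED}\prod_{\indexIterationOther=1}^{\numberOfIterations}\constante^{2\scaleEexp[\indexED,\indexIterationOther]\monomialAmp[{\permutationMono[{\indexIterationOther}]}]}~.\nonumber
\end{align}

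First, I would change the summation variable from $\seqx$ to $(\monomialAmp[{\permutationMono[{1}]}],\mydots,\monomialAmp[{\permutationMono[{\numberOfIterations}]}])$, which ranges over all of $\integers^\numberOfIterations_2$ by bijectivity; the constraint $\monomialAmp[{\permutationMono[{\indexIteration}]}]=1$ (respectively $=0$) then simply pins the $\indexIteration$th transformed coordinate. Because the summand is a product over these coordinates, fixing the $\indexIteration$th one and summing the remaining $\numberOfIterations-1$ coordinates over $\{0,1\}$ independently gives
\begin{align}
\sum_{\substack{\forall\seqx\in\integers^\numberOfIterations_2\\\monomialAmp[{\permutationMono[{\indexIteration}]}]=1}}\constante^{2\funcfForFinalAmplitudeED[\indexED](\seqx)}=\constante^{2\arbitraryScaleE_\indexED}\constante^{2\scaleEexp[\indexED,\indexIteration]}\prod_{\indexIterationOther\neq\indexIteration}\left(1+\constante^{2\scaleEexp[\indexED,\indexIterationOther]}\right)~,\nonumber
\end{align}
while the corresponding sum over $\monomialAmp[{\permutationMono[{\indexIteration}]}]=0$ differs only by replacing the isolated factor $\constante^{2\scaleEexp[\indexED,\indexIteration]}$ with $1$. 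Dividing the two expressions, every factor except the $\indexIteration$th cancels, which establishes the first equality of the proposition.

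For the closed form, I would substitute the normalization coefficient \eqref{eq:normalizationCoef}, which yields $\constante^{2\arbitraryScaleE_\indexED}=\prod_{\indexIterationOther=1}^{\numberOfIterations}\frac{2}{1+\constante^{2\scaleEexp[\indexED,\indexIterationOther]}}$. Inserting this into the $\monomialAmp[{\permutationMono[{\indexIteration}]}]=1$ sum, the factors $1+\constante^{2\scaleEexp[\indexED,\indexIterationOther]}$ for $\indexIterationOther\neq\indexIteration$ cancel against their reciprocals, leaving $\frac{2\constante^{2\scaleEexp[\indexED,\indexIteration]}}{1+\constante^{2\scaleEexp[\indexED,\indexIteration]}}\,2^{\numberOfIterations-1}=\frac{\constante^{2\scaleEexp[\indexED,\indexIteration]}}{1+\constante^{2\scaleEexp[\indexED,\indexIteration]}}2^{\numberOfIterations}$, as claimed; the $\monomialAmp[{\permutationMono[{\indexIteration}]}]=0$ sum then equals $\frac{1}{1+\constante^{2\scaleEexp[\indexED,\indexIteration]}}2^{\numberOfIterations}$, consistent with the first equality.

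The only real subtlety—the ``hard part''—is justifying the change of variables: one must confirm that $\seqx\mapsto(\monomialAmp[{\permutationMono[{1}]}],\mydots,\monomialAmp[{\permutationMono[{\numberOfIterations}]}])$ is a bijection on $\integers^\numberOfIterations_2$, so that each constrained sum genuinely decouples into independent single-coordinate sums. This is precisely the Gray-code correspondence established in \cite{sahin_2020gm} and recalled after Theorem~\ref{th:reduced}; once it is invoked, the remainder is a routine factorization of a sum over a product set.
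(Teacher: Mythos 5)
Your proof is correct, and it establishes both identities, but it takes a more computational route than the paper's. For the first equality the two arguments agree in substance: the paper disposes of it in one line by noting that $\constante^{2\scaleEexp[\indexED,\indexIteration]\monomialAmp[{\permutationMono[{\indexIteration}]}]}=1$ on the half where $\monomialAmp[{\permutationMono[{\indexIteration}]}]=0$, which implicitly relies on exactly the Gray-code bijection you spell out (one needs the residual coordinates $(\monomialAmp[{\permutationMono[{\indexIterationOther}]}])_{\indexIterationOther\neq\indexIteration}$ to sweep the same set $\{0,1\}^{\numberOfIterations-1}$ on both halves); making that explicit is a virtue of your write-up, since it is the only genuinely delicate point. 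The divergence is in the closed form: the paper never evaluates any product over coordinates. It instead reuses the fact, already recorded when deriving \eqref{eq:normalizationCoef}, that $\norm{\transmittedSeq[\indexED]}_2^2=2^\numberOfIterations$; it splits this total sum into the $\monomialAmp[{\permutationMono[{\indexIteration}]}]=0$ and $\monomialAmp[{\permutationMono[{\indexIteration}]}]=1$ halves, applies the first identity to write the total as $(1+\constante^{2\scaleEexp[\indexED,\indexIteration]})$ times the $\monomialAmp[{\permutationMono[{\indexIteration}]}]=0$ sum, and solves that single linear equation. You, by contrast, evaluate each constrained sum explicitly as $\constante^{2\arbitraryScaleE_\indexED}\,\constante^{2\scaleEexp[\indexED,\indexIteration]\monomialAmp[{\permutationMono[{\indexIteration}]}]}\prod_{\indexIterationOther\neq\indexIteration}\bigl(1+\constante^{2\scaleEexp[\indexED,\indexIterationOther]}\bigr)$ and then substitute $\constante^{2\arbitraryScaleE_\indexED}=\prod_{\indexIterationOther=1}^{\numberOfIterations}2/\bigl(1+\constante^{2\scaleEexp[\indexED,\indexIterationOther]}\bigr)$ from \eqref{eq:normalizationCoef}. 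The two routes are logically equivalent---the norm identity the paper invokes is itself a consequence of your product computation---so what yours buys is self-containedness (it does not lean on the norm fact asserted in the body text, and in effect re-derives it), while the paper's buys brevity, collapsing all bookkeeping over the coordinates $\indexIterationOther\neq\indexIteration$ into the single known quantity $2^\numberOfIterations$.
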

\begin{proof}[Proof of Proposition~\ref{pro:sum}]
	The first identity is because $\constante^{2\scaleEexp[\indexED,\indexIteration]\monomialAmp[{\permutationMono[{\indexIteration}]}]}=1$ for $\monomialAmp[{\permutationMono[{\indexIteration}]}]=0$. 
	With \eqref{eq:normalizationCoef}, $\norm{\transmittedSeq[\indexED]}_2^2=2^\numberOfIterations$ holds. Hence,
	\begin{align}
		\norm{\transmittedSeq[\indexED]}_2^2=&\sum_{\forall\seqx\in\integers^\numberOfIterations_2}
		\constante^{2\funcfForFinalAmplitudeED[\indexED](\seqx)}=
		\sum_{\substack{\forall\seqx\in\integers^\numberOfIterations_2\\\monomialAmp[{\permutationMono[{\indexIteration}]}]=0}} 
		\constante^{2\funcfForFinalAmplitudeED[\indexED](\seqx)}+\sum_{\substack{\forall\seqx\in\integers^\numberOfIterations_2\\\monomialAmp[{\permutationMono[{\indexIteration}]}]=1}} 
		\constante^{2\funcfForFinalAmplitudeED[\indexED](\seqx)}\nonumber \\=&\sum_{\substack{\forall\seqx\in\integers^\numberOfIterations_2\\\monomialAmp[{\permutationMono[{\indexIteration}]}]=0}}
		\constante^{2\funcfForFinalAmplitudeED[\indexED](\seqx)}+\constante^{2\scaleEexp[\indexED,\indexIteration]}\sum_{\substack{\forall\seqx\in\integers^\numberOfIterations_2\\\monomialAmp[{\permutationMono[{\indexIteration}]}]=0}}\constante^{2\funcfForFinalAmplitudeED[\indexED](\seqx)} = 2^\numberOfIterations~.\nonumber
	\end{align}
\end{proof}
\begin{proof}[Proof of Lemma~\ref{lemma:exp}]
	By using Proposition~\ref{pro:sum}, we can calculate $\expectationOperator[{\metricPlus[\indexIteration]}][]$ and $\expectationOperator[{\metricMinus[\indexIteration]}][]$ as 
	\small\begin{align}
		&\expectationOperator[{\metricPlus[\indexIteration]}][]
		=\sum_{\substack{\forall\seqx\in\integers^\numberOfIterations_2\\\monomialAmp[{\permutationMono[{\indexIteration}]}]=1}} \expectationOperator[{\left|\sum_{\indexED=1}^{\numberOfEdgeDevices}	\channelAtSubcarrier[\indexED,\funcEnum(\seqx)] 	\constante^{\funcfForFinalAmplitudeED[\indexED](\seqx)}	\constante^{\constantj\frac{2\pi}{\numberOfPointsForPSK}\funcfForFinalPhaseED[\indexED](\seqx)}+\noiseAtSubcarrier[\funcEnum(\seqx)]\right|^2}][]
		\nonumber\\
		&=\sum_{\indexED=1}^{\numberOfEdgeDevices} \sum_{\substack{\forall\seqx\in\integers^\numberOfIterations_2\\\monomialAmp[{\permutationMono[{\indexIteration}]}]=1}}
		\constante^{2\funcfForFinalAmplitudeED[\indexED](\varMonomial)}+2^{\numberOfIterations-1}\noiseVariance=2^{\numberOfIterations}\left(\sum_{\indexED=1}^{\numberOfEdgeDevices}
		\frac{\constante^{2\scaleEexp[\indexED,\indexIteration]}}{1+\constante^{2\scaleEexp[\indexED,\indexIteration]}}+\frac{\noiseVariance}{2}\right)\nonumber\\
		&=2^\numberOfIterations\left(\frac{\constante^{2\scalingParameters}}{1+\constante^{2\scalingParameters}}\numberOfEDsPlus[\indexIteration]+\frac{1}{2}\numberOfEDsZero[\indexIteration]+\frac{\constante^{-2\scalingParameters}}{1+\constante^{-2\scalingParameters}}\numberOfEDsMinus[\indexIteration]+\frac{1}{2}\noiseVariance	\right)\nonumber~,
	\end{align}\normalsize
	and
	\small\begin{align}
		&\expectationOperator[{\metricMinus[\indexIteration]}][]
		=\sum_{\substack{\forall\seqx\in\integers^\numberOfIterations_2\\\monomialAmp[{\permutationMono[{\indexIteration}]}]=0}} \expectationOperator[{\left|\sum_{\indexED=1}^{\numberOfEdgeDevices}	\channelAtSubcarrier[\indexED,\funcEnum(\seqx)] 	\constante^{\funcfForFinalAmplitudeED[\indexED](\seqx)}	\constante^{\constantj\frac{2\pi}{\numberOfPointsForPSK}\funcfForFinalPhaseED[\indexED](\seqx)}+\noiseAtSubcarrier[\funcEnum(\seqx)]\right|^2}][]
		\nonumber\\
		&=\sum_{\indexED=1}^{\numberOfEdgeDevices} \sum_{\substack{\forall\seqx\in\integers^\numberOfIterations_2\\\monomialAmp[{\permutationMono[{\indexIteration}]}]=0}}
		\constante^{2\funcfForFinalAmplitudeED[\indexED](\varMonomial)}+2^{\numberOfIterations-1}\noiseVariance=2^{\numberOfIterations}\left(\sum_{\indexED=1}^{\numberOfEdgeDevices}
		\frac{1}{1+\constante^{2\scaleEexp[\indexED,\indexIteration]}}+\frac{\noiseVariance}{2}\right)\nonumber\\
		&=2^\numberOfIterations\left(\frac{1}{1+\constante^{2\scalingParameters}}\numberOfEDsPlus[\indexIteration]+\frac{1}{2}\numberOfEDsZero[\indexIteration]+\frac{1}{1+\constante^{-2\scalingParameters}}\numberOfEDsMinus[\indexIteration]+\frac{1}{2}\noiseVariance	\right)\nonumber~.\end{align}
\end{proof}\normalsize

Without any concern about the norm of  $\transmittedSeq[\indexED]$ with \eqref{eq:normalizationCoef}, we can  choose an arbitrarily large  $\scalingParameters$, leading to following result:
\begin{corollary}
Following identities hold:
\begin{align}
\lim_{\scalingParameters\rightarrow\infty}\expectationOperator[{\metricPlus[\indexIteration]}][]&=2^\numberOfIterations\numberOfEDsPlus[\indexIteration]+2^{\numberOfIterations-1}\numberOfEDsZero[\indexIteration]+2^{\numberOfIterations-1}\noiseVariance~,\nonumber\\
\lim_{\scalingParameters\rightarrow\infty}\expectationOperator[{\metricMinus[\indexIteration]}][]&=2^\numberOfIterations\numberOfEDsMinus[\indexIteration]+2^{\numberOfIterations-1}\numberOfEDsZero[\indexIteration]+2^{\numberOfIterations-1}\noiseVariance~. \nonumber
\end{align}
\label{corol:exp}
\end{corollary}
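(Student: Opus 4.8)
The plan is to start directly from the two closed-form expressions established in Lemma~\ref{lemma:exp} and evaluate each of them as $\scalingParameters\rightarrow\infty$, term by term. Because both $\expectationOperator[{\metricPlus[\indexIteration]}][]$ and $\expectationOperator[{\metricMinus[\indexIteration]}][]$ are already finite sums of three terms in which the channel and noise expectations have been carried out, no further probabilistic argument is needed; the remaining work is purely the limit of a real-valued function of $\scalingParameters$. The whole corollary is therefore an immediate consequence of Lemma~\ref{lemma:exp}, and I would flag it as such at the outset.

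First I would handle $\expectationOperator[{\metricPlus[\indexIteration]}][]$, whose only $\scalingParameters$-dependence sits in its first two terms. The key observation is that the coefficient $\constante^{2\scalingParameters}/(1+\constante^{2\scalingParameters})$ multiplying $\numberOfEDsPlus[\indexIteration]$ can be rewritten as $1/(1+\constante^{-2\scalingParameters})$, which tends to $1$ as $\scalingParameters\rightarrow\infty$, while the coefficient $\constante^{-2\scalingParameters}/(1+\constante^{-2\scalingParameters})$ multiplying $\numberOfEDsMinus[\indexIteration]$ tends to $0$ since $\constante^{-2\scalingParameters}\rightarrow0$. The third term $2^{\numberOfIterations-1}(\numberOfEDsZero[\indexIteration]+\noiseVariance)$ carries no $\scalingParameters$-dependence and passes through unchanged. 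Collecting these limits yields the first identity. I would then repeat the argument symmetrically for $\expectationOperator[{\metricMinus[\indexIteration]}][]$: the coefficient $1/(1+\constante^{2\scalingParameters})$ of $\numberOfEDsPlus[\indexIteration]$ vanishes, the coefficient $1/(1+\constante^{-2\scalingParameters})$ of $\numberOfEDsMinus[\indexIteration]$ tends to $1$, and the noise/absentee term is again untouched, giving the second identity.

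Honestly, there is no genuine obstacle here: the mathematical content reduces to a pair of elementary sigmoid limits applied to the already-derived expressions of Lemma~\ref{lemma:exp}, so the proof is essentially a one-line substitution for each metric. The only point I would state explicitly, for the reader's benefit rather than for rigor, is the interpretation noted just before the corollary — namely that letting $\scalingParameters\rightarrow\infty$ is a legitimate operating point because the normalization \eqref{eq:normalizationCoef} holds $\norm{\transmittedSeq[\indexED]}_2^2=2^\numberOfIterations$ fixed independently of $\scalingParameters$. This confirms that the limiting regime corresponds to a finite-power transmit configuration and that the resulting expectations cleanly separate the $\numberOfEDsPlus[\indexIteration]$ and $\numberOfEDsMinus[\indexIteration]$ contributions onto $\metricPlus[\indexIteration]$ and $\metricMinus[\indexIteration]$ respectively, which is precisely what makes the decision rule \eqref{eq:decision} recover the correct \ac{MV} in the average sense.
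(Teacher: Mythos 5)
Your proof is correct and matches the paper's (implicit) argument exactly: the corollary is stated as an immediate consequence of Lemma~\ref{lemma:exp}, obtained by taking $\scalingParameters\rightarrow\infty$ in its closed-form expressions, with the sigmoid coefficients $\constante^{2\scalingParameters}/(1+\constante^{2\scalingParameters})\rightarrow1$ and $\constante^{-2\scalingParameters}/(1+\constante^{-2\scalingParameters})\rightarrow0$ doing all the work. Your closing remark about the normalization \eqref{eq:normalizationCoef} keeping $\norm{\transmittedSeq[\indexED]}_2^2=2^\numberOfIterations$ independent of $\scalingParameters$ is precisely the justification the paper gives for allowing arbitrarily large $\scalingParameters$.
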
 

Since $\lim_{\scalingParameters\rightarrow\infty}\expectationOperator[{{\metricPlus[\indexIteration]-\metricMinus[\indexIteration]}}][]=2^{\numberOfIterations-1}(\numberOfEDsPlus[\indexIteration]-\numberOfEDsMinus[\indexIteration])$ holds, we can infer that the detector in \eqref{eq:decision} detects the correct \ac{MV} in average for $\scalingParameters\rightarrow\infty$. In other words, although the proposed scheme makes errors instantaneously, in average, the error is centered around the correct MV.

\def\computationRate{\mathcal{R}}
The proposed scheme computes $\numberOfIterations$ \acp{MV} over $2^{\numberOfIterations}$ complex-valued resources. Hence,  the number of functions  computed per channel use (in real dimension)  can be expressed as $	\computationRate={\numberOfIterations}/{2^{\numberOfIterations+1}}
$. Hence, for a larger $\numberOfIterations$, the computation rate reduces, but the \ac{CER} improves  as demonstrated in Section~\ref{sec:numerical}.

\section{Numerical Results}
\label{sec:numerical}
In this section, we first analyze the performance of the scheme for an arbitrary application. Subsequently, we then apply it to the scenario discussed in Section~\ref{sec:system}.
\begin{figure*}
	\centering
	\subfloat[{PMEPR distribution ($\numberOfIterations=8$).}]{\includegraphics[width =\figuresize]{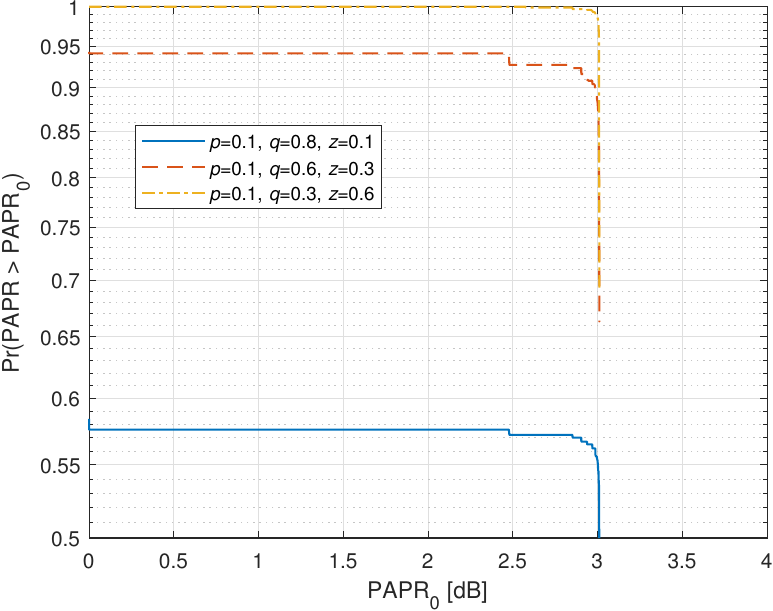}
	\label{subfig:pmepr}}	~~
	\subfloat[{CER in AWGN channel.}]{\includegraphics[width =\figuresize]{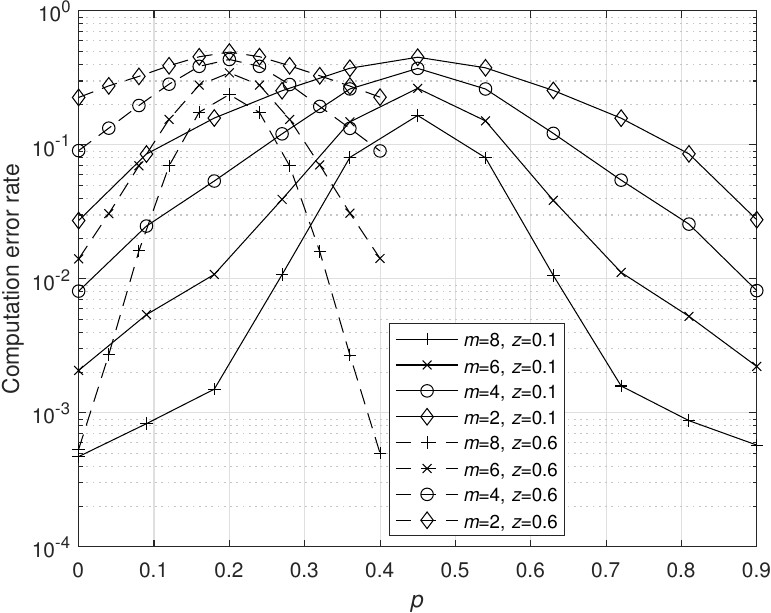}
		\label{subfig:CERAWGN}}	\\
	\subfloat[{CER in flat-fading channel.}]{\includegraphics[width =\figuresize]{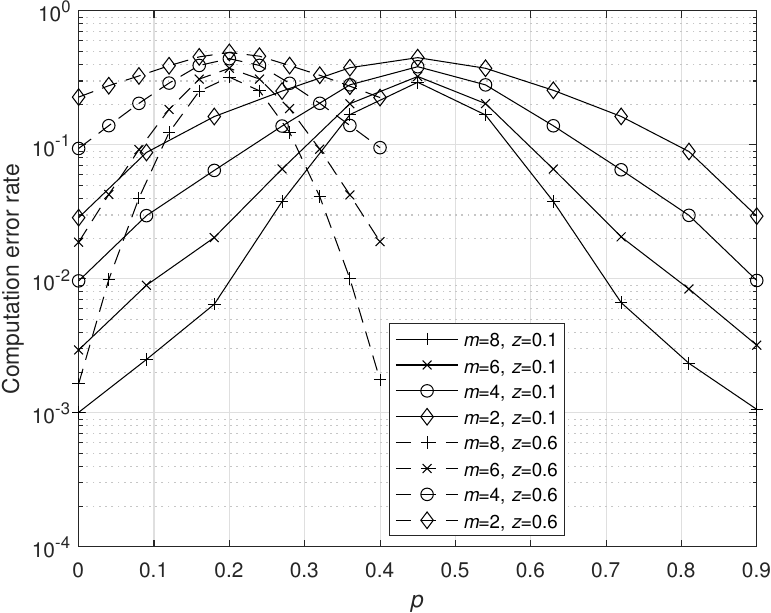}
		\label{subfig:CERflatFading}} 	~~
	\subfloat[{CER in frequency-selective channel.}]{\includegraphics[width =\figuresize]{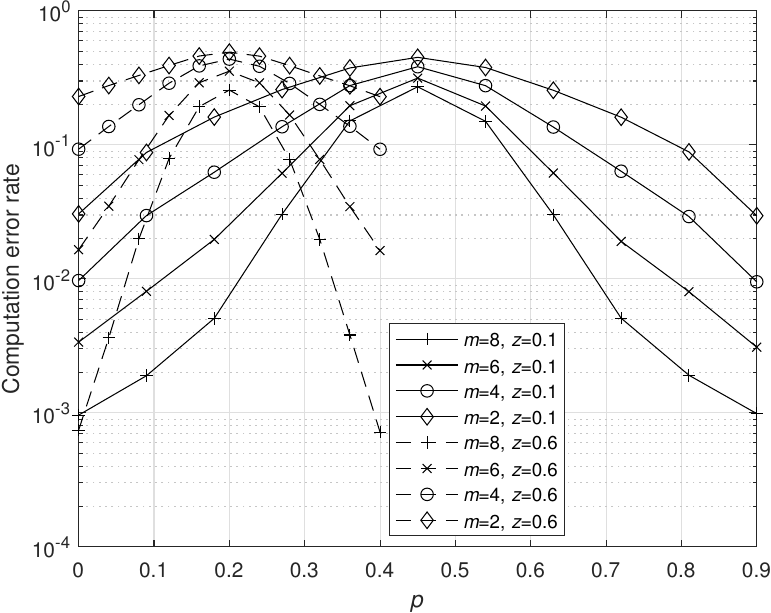}
	\label{subfig:CERselectiveFading}}
	\caption{CER for in AWGN, flat-fading, and frequency-selective channels ($\numberOfEdgeDevices=50$ sensors) and PMEPR distribution.}
	\label{fig:pmeprcer}
\end{figure*}

\begin{figure}
	\centering
	\subfloat[{UAV's trajectory in time.}]{\includegraphics[width =3in]{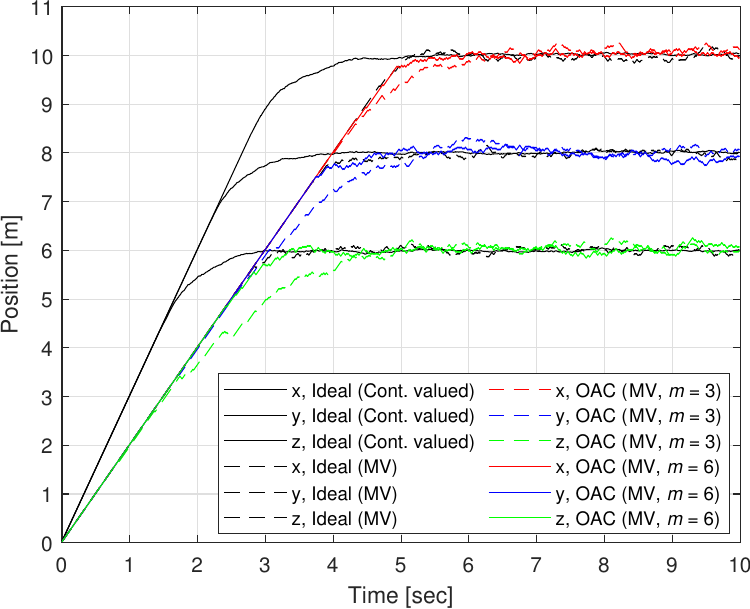}
		\label{subfig:wayPointSingleTime}}	\vspace{-4mm}\\
	\subfloat[{UAV's trajectory in space.}]{\includegraphics[width =3.5in]{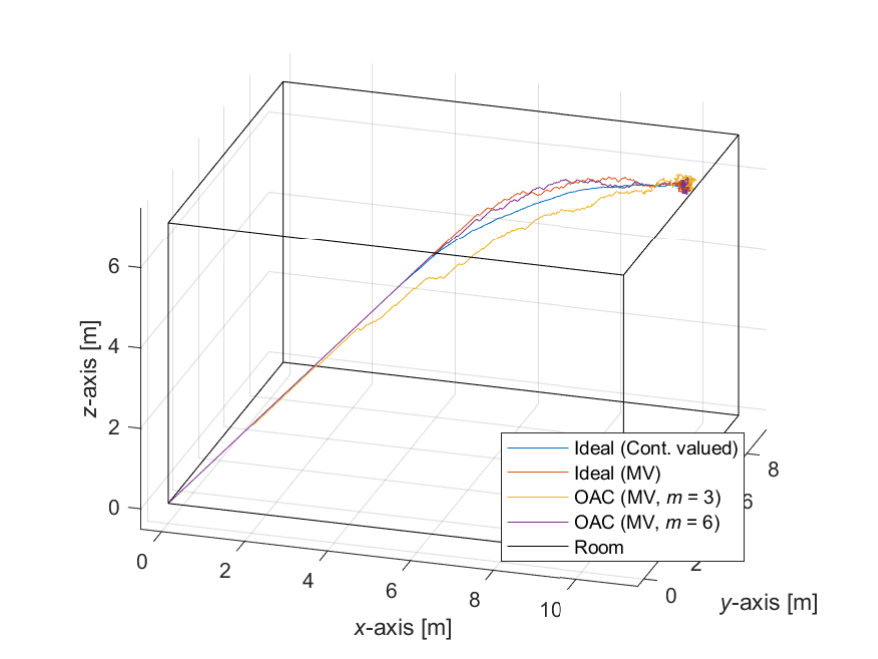}
		\label{subfig:wayPointSingleSpace}}
	\caption{UAV's trajectory with a single point of interest. The initial position is $(0,0,0)$ and the target position is $(10,8,6)$.}
	\label{fig:waypointSingle}
\end{figure}
\begin{figure}
	\centering
	\subfloat[UAV's trajectory in time.]{\includegraphics[width =3in]{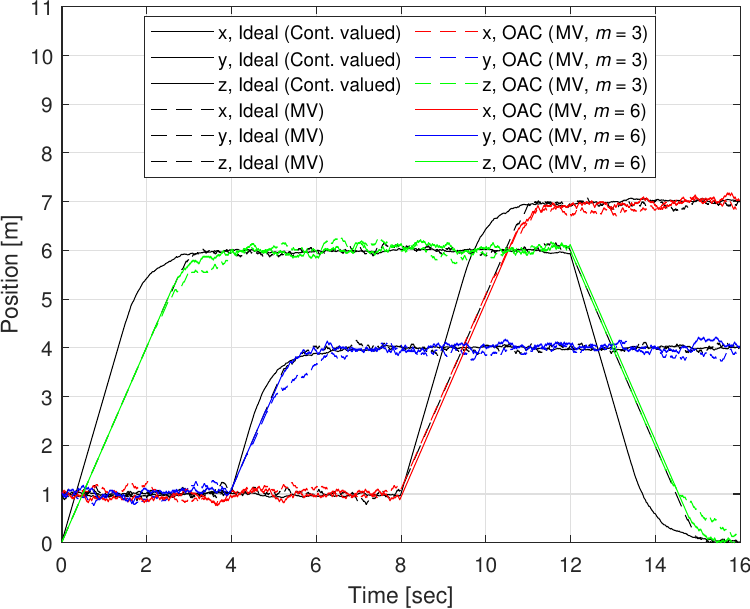}
		\label{subfig:wayPointTime}}\vspace{-4mm}\\	
	\subfloat[{UAV's trajectory in space.}]{\includegraphics[width =3.5in]{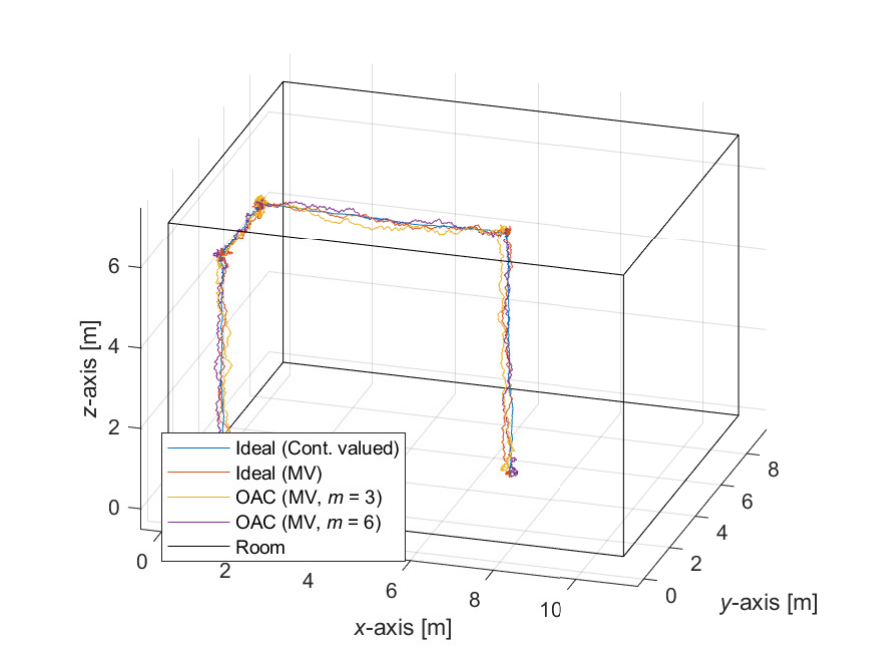}
		\label{subfig:waypointSpace}}		
	\caption{UAV's trajectory with multiple points of interest, i.e., $(1,1,6)$, $(1,4,6)$, $(7,4,6)$, and $(7,4,0)$. The initial position  is $(1,1,0)$.}
	\label{fig:waypoint}
\end{figure}
In \figurename~\ref{fig:pmeprcer}, we provide the \ac{PMEPR} distribution and the \ac{CER} for a given $\{\probabilityPlusDecision,\probabilityMinusDecision,\probabilityNullDecision\}$, where $\probabilityPlusDecision$, $\probabilityMinusDecision$, and $\probabilityNullDecision$ denote the probabilities  
$ \probability[{\voteVectorEDEle[\indexED][\indexIteration]>0}]$, $ \probability[{\voteVectorEDEle[\indexED][\indexIteration]<0}]$, and $ \probability[{\voteVectorEDEle[\indexED][\indexIteration]= 0}]$, respectively,  for $\numberOfEdgeDevices=50$ sensors. In \figurename~\ref{fig:pmeprcer}\subref{subfig:pmepr}, we provide \ac{PMEPR} distribution for $\numberOfIterations=8$,  $\probabilityPlusDecision=0.1$ and $\probabilityNullDecision\in\{0.1,0.3,0.6\}$. We show that the \ac{PMEPR} of the proposed scheme is always less than or equal to $3$~dB due to the properties of the \acp{CS}. If there are no absentee votes, the maximum \ac{PMEPR} of the proposed scheme is $0$~dB since a single subcarrier is used for the transmission. Hence, for a larger absentee vote probability, the probability of observing $0$~dB PMEPR increases. The combination of sequences that lead to $0$~dB  and $3$~dB PMEPR values result in the jumps in the PMEPR distribution given in \figurename~\ref{fig:pmeprcer}\subref{subfig:pmepr}. 
In \figurename~\ref{fig:pmeprcer}\subref{subfig:CERAWGN}-\subref{subfig:CERselectiveFading}, we analyze \ac{CER} for  $\probabilityNullDecision\in\{0.1,0.6\}$ and $\numberOfIterations=\{2,4,6,8\}$ by sweeping $\probabilityPlusDecision$ in \ac{AWGN}  (i.e., $\channelAtSubcarrier[\indexED,\varMonomial]=1$), flat-fading   (i.e., $\channelAtSubcarrier[\indexED,\varMonomial]=\channelAtSubcarrier[\indexED,\varMonomial']\sim\complexGaussian[0][1]$), and frequency-selective   (i.e., $\channelAtSubcarrier[\indexED,\varMonomial]\sim\complexGaussian[0][1]$) channels, respectively. We observe that the proposed scheme achieves a better \ac{CER} for increasing $\numberOfIterations$ at the expense of more resource consumption in all channel conditions. The \ac{CER} improves for a small or a large $\probabilityPlusDecision$ since more sensors vote for $-1$ or $+1$, respectively. The performance in frequency selective channel is slightly better than the flat-fading channel because of the diversity~gain.

 In \figurename~\ref{fig:waypointSingle} and  \figurename~\ref{fig:waypoint},  we consider the distributed UAV guidance scenario discussed in Section~\ref{sec:system} for $\numberOfEdgeDevices=50$ sensors.  We assume $\Trefresh=10$~ms, $\updateRate=2$, $\maximumVelocity = 3$~m/s, $\varianceSensor=2$, and $\SNR=10$~dB. We  provide the trajectory of the \ac{UAV} in time and space for the aforementioned  waypoint flight control scenario.  We consider two cases. In the first case, there is only one point of interest $(\locationTargetEle[1],\locationTargetEle[2],\locationTargetEle[3])=(10,8,6)$ and the initial position of the \ac{UAV} is $(0,0,0)$. In the second case, the points of interest are $(1,1,6)$, $(1,4,6)$, $(7,4,6)$, and $(7,4,0)$, where the initial position of the \ac{UAV} is $(1,1,0)$. We compare the  proposed scheme for $\numberOfIterations\in\{3,6\}$ in a practical communication channel with both continuous and \ac{MV}-based feedback in an ideal communication channel (i.e., no error due to the communications).
   As can be seen from \figurename~\ref{fig:waypointSingle}\subref{subfig:wayPointSingleTime}, for the continuous-valued feedback, the \ac{UAV} reaches its position faster than any \ac{MV}-based approach. This is because the velocity increment is limited with the step size for \ac{MV}-based feedback in our setup. Hence, as can be seen from \figurename~\ref{fig:waypointSingle}\subref{subfig:wayPointSingleSpace}  the \ac{UAV}'s trajectory in space is slightly bent. Since the proposed scheme is also based on the \ac{MV} computation, its characteristics are similar to the one with MV computation in an ideal channel. Since the \ac{CER} with $\numberOfIterations=6$ is lower than the one with  $\numberOfIterations=3$, the proposed scheme for $\numberOfIterations=6$ performs better and its characteristics are similar to the ideal \ac{MV}-based feedback. The position of the \ac{UAV} in time and space for multiple points of interest is given in \figurename~\ref{fig:waypoint}\subref{subfig:wayPointTime} and \figurename~\ref{fig:waypoint}\subref{subfig:waypointSpace}, respectively. The proposed scheme for $\numberOfIterations=6$ performs similarly to the one with the  \acp{MV} in ideal communications and increasing $\numberOfIterations$ leads to a more stable trajectory.

\section{Concluding Remarks}
In this study, we modulate the amplitude of the \ac{CS} based on Theorem~\ref{th:reduced} to develop a new non-coherent \ac{OAC} scheme for \ac{MV} computation. We show that the proposed scheme reduces the \ac{CER} via bandwidth expansion in both flat-fading and frequency-selective fading channel conditions while maintaining the \ac{PMEPR} of the transmited signals to be less than or equal to $3$~dB. We demonstrate the applicability of the proposed OAC scheme to an indoor  flight control scenario and show that the proposed scheme performs similar to the case where \ac{MV} without OAC with larger length of sequences. In the future work, we provide the theoretical computation error and convergence analyses for distributed UAV guidance.

\bibliographystyle{IEEEtran}
\bibliography{references}

\end{document}